\begin{document}

\title{Robustness of Approval-Based Multiwinner Voting
  Rules\thanks{This paper is based on two conference
    papers~\citep{gaw-fal:c:approval-robustness,fal-gaw-kus:c:greedy-approval-robustness}. The
    paper streamlines the presentation of the results, includes the
    missing proofs, and contains additional discussion. On the other
    hand, some results---including all the experimental ones---were
    omitted.}}
  \author{Piotr Faliszewski\\
    AGH University\\
    Krakow, Poland \and
    Grzegorz Gawron\\
    AGH University and\\ VirtusLab\\
    Krakow, Poland \and
    Bartosz Kusek\\
    AGH University\\
    Krakow, Poland}

\maketitle

\newcommand{\calS}{\mathcal{S}}
\newcommand{\calT}{\mathcal{T}}
\newcommand{\ggcaps}[1]{\textsc{#1}}
\newcommand{\ggR}{\ensuremath{\mathcal{R}}}
\newcommand{\ggnote}[1]{\marginpar{#1}\reversemarginpar}
\newcommand{\ggfnote}[2]{\footnote{[#1]: #2}}
\newcommand{\ggrobustness}[2]{\ggcaps{\emph{#1}-#2-Robustness}}
\newcommand{\ggrobust}[2]{\ggcaps{\emph{#1}-#2-Robust}}
\newcommand{\ggx}{\circ}
\newcommand{\ggX}{\bigcirc}
\newcommand{\ggrr}[1]{\ggcaps{\emph{#1}-Robustness-Radius}}
\newcommand{\ggnr}[1]{\ggcaps{\emph{#1}-Noise-Robustness}}
\newcommand{\ggmath}[1]{\ensuremath{#1}}
\newcommand{\gggpav}{gPAV}
\newcommand{\gggccav}{gCCAV}
\newcommand{\ggpav}{PAV}
\newcommand{\ggccav}{CC}
\newcommand{\ggsav}{SAV}
\newcommand{\ggav}{{\ensuremath{\mathrm{AV}}}}
\newcommand{\gv}{\ggmath{v}}
\newcommand{\gvv}{\ggmath{w}}
\newcommand{\gvvv}{\ggmath{z}}
\newcommand{\gV}{\ggmath{V}}
\newcommand{\gVV}{\ggmath{W}}
\newcommand{\gVVV}{\ggmath{Z}}
\newcommand{\gc}{\ggmath{a}}
\newcommand{\gcc}{\ggmath{b}}
\newcommand{\gC}{\ggmath{A}}
\newcommand{\gCC}{\ggmath{B}}
\newcommand{\ggop}{\textsc{Op}}
\newcommand{\ggadd}{{\textsc{Add}}}
\newcommand{\ggremove}{\textsc{Remove}}
\newcommand{\ggswap}{\textsc{Swap}}
\newcommand{\ggaccept}{\textsc{Accept}}
\newcommand{\ggreject}{\textsc{Reject}}
\newcommand{\ggheadline}[1]{\smallskip\noindent\textbf{#1.}}
\newcommand{\ggwav}{\ggmath{\omega}-\ggav}

\newcommand{\ggfpt}{\ggmath{\mathrm{FPT}}}
\newcommand{\p}{{{\mathrm{P}}}}
\newcommand{\fp}{{{\mathrm{FP}}}}
\newcommand{\np}{{{\mathrm{NP}}}}
\newcommand{\fpt}{{{\mathrm{FPT}}}}
\newcommand{\sharpp}{{{\mathrm{\#P}}}}
\newcommand{\wone}{{{\mathrm{W[1]}}}}
\newcommand{\wtwo}{{{\mathrm{W[2]}}}}

\newcommand{\score}{{\mathrm{score}}}
\newcommand{\approval}{{\mathrm{app}}}
\newcommand{\av}{{\mathrm{AV}}}
\newcommand{\pav}{{\mathrm{PAV}}}
\newcommand{\ccav}{{\mathrm{CC}}}
\newcommand{\sav}{{\mathrm{SAV}}}

\newcommand{\calR}{\mathcal{R}}
\newcommand{\pref}{\succ}

\newcommand{\ggTODO}{{\textcolor{red}{TODO}}}

\newtheorem{theorem}{Theorem}[section]
\newtheorem{lemma}[theorem]{Lemma}
\newtheorem{proposition}[theorem]{Proposition}
\newtheorem{definition}{Definition}[section]
\newtheorem{corollary}[theorem]{Corollary}

\begin{abstract}
  We investigate how robust approval-based multiwinner voting rules
  are to small perturbations in the votes.  In particular, we consider
  the extent to which a committee can change after we add/remove/swap
  one approval, and we consider the computational complexity of
  deciding how many such operations are necessary to change the set of
  winning committees. We also consider the counting variants of our
  problems, which can be interpreted as computing the probability that
  the result of an election changes after a given number of random
  perturbations of the given election.
\end{abstract}

\section{Introduction}

The goal of a multiwinner election is to select a committee, i.e., a
fixed-size set of candidates, based on the opinions of the voters.
For example, citizens of a country may choose the members of their
parliament, judges in a competition may select a group of its
finalists, and a company may choose the products to offer based on the
preferences of its customers; naturally, for each of these
applications we would use a different voting rule, with different
properties. Unfortunately, even the most appropriate rule may give
unsatisfying results if the input votes are distorted.  Indeed, some
votes may be recorded erroneously due to mistakes of the voters or due
to the mistakes of the election officials (or their machinery). In
either case, it is interesting to know the consequences of such
distortions.  To address this issue, Bredereck et
al.~\cite{bre-fal-kac-nie-sko-tal:j:robustness} initiated the study of
robustness of multiwinner elections to small changes in the input
votes. We follow up on their ideas, but instead of considering ordinal
elections, we focus on the approval-based ones. (In the approval
setting each voter simply indicates which candidates he or she finds
acceptable; in the ordinal case, the voters rank the candidates from
the most to the least appealing one.)

\subsection{Robustness in Multiwinner Elections}
We are interested both in the extent to which a winning committee can
change---in the worst case---due to a small perturbation of a single
vote, and in computing the number of such perturbations necessary to
change the result of a specific, given election. Additionally, we are
also interested in computing the probability that introducing a given
number of (randomly selected) perturbations changes the result.

Regarding the first issue, we use the notion of the \emph{robustness
  level} of a multiwinner rule. Bredereck at
al.~\cite{bre-fal-kac-nie-sko-tal:j:robustness} defined it to be
$\ell$ if it is guaranteed that after swapping a pair of adjacent
candidates in a vote (in the ordinal election model) the winning
committee remains the same, except that up to $\ell$ candidates may be
replaced and, indeed, there are cases when this happens. In the
approval setting, instead of swapping adjacent candidates we consider:
\begin{enumerate}
\item adding a single approval for a single candidate,
\item removing a single approval from a single candidate, or
\item swapping a single approval in a single vote (i.e., moving it
  from one candidate to the other).
\end{enumerate}
In consequence, we consider three robustness level notions, one for
each way of modifying the election.

Regarding the number of perturbations needed to change the election
result, we consider three variants of the \textsc{Robustness Radius}
problem, originally introduced by Bredereck et
al.~\cite{bre-fal-kac-nie-sko-tal:j:robustness}, where we ask how many
approvals have to be added, removed, or swapped to change the election
result. To compute the probability that a number of randomly
introduced perturbations change the election result, we consider
counting versions of these problems (see below).

\subsection{Multiwinner Voting Rules}
We obtain our results by considering the following seven voting rules:
Approval Voting (AV), Satisfaction Approval Voting
(SAV~\cite{bra-kil:b:sav}), approval-based Chamberlin--Courant
(CC~\cite{cha-cou:j:cc,pro-ros-zoh:j:proportional-representation}),
Proportional Approval Voting (PAV~\cite{thi:j:pav}), greedy
variants of the latter two rules (GreedyCC and
GreedyPAV~\citep{bou-lu:c:chamberlin-courant,azi-bri-con-elk-fre-wal:cOUTDATED:justified-representation}),
as well as Phragm\'en's sequential rule
(Phragm\'en~\citep{jan:t:phragmen,bri-fre-jan-lac:c:phragmen}).  All
these rules are different in their nature and capture different types
of multiwinner elections. For example, AV chooses individually
excellent candidates and can be used to select finalists of
competitions, CC chooses diverse committees and can be used to
select products that a company should offer to its
customers~\cite{elk-fal-sko-sli:j:multiwinner-properties,fal-sko-sli-tal-tal:j:hierarchy-committee},
PAV aims at representing the voters
proportionally~\cite{azi-bri-con-elk-fre-wal:j:justified-representation,lac-sko:c:approval-thiele}
and Phragm\'en captures a different approach to proportionality (see
the textbook of \citet{lac-sko:b:multiwinner-approval} for a detailed discussion of
proportionality in approval elections; see also the
work of \citet{bri-las-sko:j:apportionment}).  GreedyCC and GreedyPAV
are approximate, sequential vairants of the CC and PAV rules,
respectively, and as opposed to these rules, they
are polynomial-time computable. SAV is closest in spirit to AV, but
has its own quirks. We refer to the chapters of
Kilgour~\cite{kil:chapter:approval-multiwinner}, Faliszewski et
al.~\cite{fal-sko-sli-tal:b:multiwinner-voting}, \citet{bau-fal-rot-sko:b:multiwinner}, and the textbook of
\citet{lac-sko:b:multiwinner-approval} for more details on multiwinner
voting.

\subsection{Our Results}\label{sec:summary}

We summarize our results in Table~\ref{tab:results}.  As in the work
of Bredereck et al.~\cite{bre-fal-kac-nie-sko-tal:j:robustness}, we
find that all our rules either have robustness level $1$ (so that a
single small change to the voters' preferences leads to replacing at
most one candidate in the winning committee) or $k$ (so that a single
change can lead to replacing the whole committee). Yet, we find two
interesting quirks regarding the SAV rule. First, we find that even
though the rule is polynomial-time computable, its robustness
level---for the case of adding or deleting approvals---is $k$; this is
the first example of such a simple rule with this behavior.  While we
also show that the same holds for other polynomial-time computable
rules---namely GreedyCC, GreedyPAV, and Phragm\'en---here the
situation is particularly interesting because SAV is based on simply
summing up the points assigned to the candidates (whereas the other
just-mentioned rules are notably more involved).  The second quirk
regarding SAV is that even though its robustness level for adding and
removing approvals is $k$, for the case of swapping approvals it is
$1$.  On the technical side, this result is a simple consequence of
the rule's definition (see Sections~\ref{sec:prelim}
and~\ref{sec:robustness} for details), but it is interesting to have
an example of such behavior. It also motivates studying the three ways
of perturbing the votes separately, even though in many cases the results
and proofs are very similar.

\begin{table}[t]
  \centering
  \caption{Summary of our results. In the three
    columns marked ``Robustness Level'' we provide the robustness
    level values for our rules; in the columns marked ``Robustness Radius''
    we indicated the complexity results for the \textsc{Robustness Radius} problem
    (including its counting variant for the AV and SAV rules; these results
    regard adding and removing approvals only, but not swapping
    them). We also indicate which of the rules are polynomial-time
    computable and which are $\np$-hard (but these results are not due
    to this paper).}
\label{tab:results}

\scalebox{0.9}{
\begin{tabular}{c|c|ccc|ccc}
  \toprule
                \multicolumn{2}{c|}{} &\multicolumn{3}{c|}{} & \multicolumn{3}{c}{Robustness Radius}\\
                \multicolumn{2}{c|}{} &\multicolumn{3}{c|}{Robustness Level} & \multicolumn{3}{c}{(decision/counting)}\\
\midrule
        &winner  & Adding  & Removing & Swapping  &  Adding  & Removing & Swapping  \\
rule      &det. & Approvals &  Approvals &  Approvals   &  Approvals &  Approvals &  Approvals \\ \midrule
AV     & $\p$ & $1$     & $1$   & $1$  &  $\p$/$\fp$ & $\p$/$\fp$ & $\p$/---  \\ 
SAV    & $\p$ & $k$     & $k$   & $1$  &  $\p$/$\sharpp$-com. &  $\p$/$\sharpp$-com. &  $\p$/---  \\ 
\midrule
CC       & $\np$-hard& $k$     & $k$   & $k$  &  $\np$-hard &  $\np$-hard &  $\np$-hard \\ 
PAV    & $\np$-hard& $k$     & $k$   & $k$  &  $\np$-hard &  $\np$-hard &  $\np$-hard \\
\midrule
GreedyCC  & P & $k$     & $k$   & $k$  &  $\np$-com. &  $\np$-com. &  $\np$-com. \\
GreedyPAV   & P & $k$     & $k$   & $k$  &  $\np$-com. &  $\np$-com. &  $\np$-com. \\
Phragm{\'e}n &   P & $k$     & $k$   & $k$  &  $\np$-com. &  $\np$-com. &  $\np$-com. \\

\bottomrule               
\end{tabular}
}

\end{table}

Regarding the \textsc{Robustness Radius} problem, we find that its
decision variants
are polynomial-time computable for AV and SAV,
but become $\np$-hard for all the other rules.\footnote{In the
  conference version of one of the papers on which this one is
  based~\citep{gaw-fal:c:approval-robustness} we mention FPT
  algorithms for CC and PAV parameterized by the number of candidates
  or the number of voters. We omit these results here to streamline
  the narrative. The idea behind the proof is very similar as that for
  bribery FPT algorithms of
  \citet{fal-sko-tal:c:bribery-success}. Informally put, an FPT
  algorithm is an algorithm that runs in polynomial time with respect
  to the size of the input and, possibly, in superpolynomial time with
  respect to the so-called ``parameter.'' For example, an FPT
  algorithm parameterized by the number of candidates can have running
  time of the form $O(2^m\cdot mn)$, where $m$ is the number of
  candidates and $n$ is the number of voters. For more details on
  parameterized algorithms, see the textbooks of
  \citet{nie:b:invitation-fpt} and
  \citet{cyg-fom-kow-lok-mar-pil-pil-sau:b:fpt}.}  The reader may
wonder if this $\np$-hardness can be extended to
$\np$-completeness. While for polynomial-time computable, sequential
rules, this indeed is the case (as reported in \Cref{tab:results}),
the situation of CC and PAV is unclear. The reason is that for these
rules it seems intractable to check if after adding/removing/swapping
approvals the result changed, which precludes a natural
$\np$-membership argument.

For AV and SAV
we also consider the counting variants of the \textsc{Robustness Radius} problem (for adding and
removing approvals; we decided not to consider approval swapping for
the sake of simplicity). The idea is as follows (we focus on adding
approvals here): Let $X$ be the number of ways in which it is possible
to add $B$ approvals to an election without changing the result, and
let $Y$ be the total number of ways in which it is possible to add $B$
approvals. The fraction $\nicefrac{X}{Y}$ gives the probability of
\emph{not changing the election result} by adding $B$ approvals
uniformly at random.
By computing this probability, we can distinguish situations where
perturbing the election \emph{may} possibly affect the election
results but it is very unlikely, from those where we expect the result
to be changed.
It turns out that our counting problems are in $\fp$ for the AV rule,
but are $\sharpp$-hard for the SAV rule.\footnote{The class $\fp$
  contains polynomial-time computable functions and the class
  $\sharpp$ is the analogue of $\np$ for counting problems.}

\subsection{Related Work}

Our work is most closely related to that of Bredereck et
al.~\cite{bre-fal-kac-nie-sko-tal:j:robustness}, which we have already
discussed above. Three other very closely related papers are due to
Misra and Sonar~\cite{mis-son:c:cc-robustness-sp}, Faliszewski et
al.~\cite{fal-sko-tal:c:bribery-success}, and
\citet{boe-fal-jan-kac:c:pb-robustness}.  Misra and
Sonar~\cite{mis-son:c:cc-robustness-sp} consider the
\textsc{Robustness Radius} problem for the case of the
Chamberlin--Courant rule, for some variants of single-peaked and
single-crossing elections and for the approval setting.  Their result
for the case of approval-based Chamberlin--Courant rule is similar to
ours and, to some extent, is stronger (in particular, they require
fewer approved candidates per voter and, after dropping this
restriction, they also obtain $\wtwo$-hardness for the parametrization
by the committee size), but the advantage of our result is that in a
single proof we cover a large subfamily of Thiele
rules~\cite{sko-fal-lan:j:collective,lac-sko:c:approval-thiele},
including CC and PAV.

Faliszewski et al.~\cite{fal-sko-tal:c:bribery-success} study the
problem of bribery in approval-based elections, where the goal is to
ensure that a particular, specified candidate becomes a member of the
winning committee by either adding, removing, or swapping
approvals. Our work is similar in that we use the same types of
operations, but our problems focus on changing the set of winning
committees and not on ensuring some candidate's victory.
\citet{kus-bre-fal-kac-kno:c:structured-approval-bribery} recently
followed-up on the work of \citet{fal-sko-tal:c:bribery-success} by
considering their problems in structured domains, also looking at
destructive variants, which are close in spirit to the problems that
we consider (for an overview of structured domains in approval
elections, see the works of Elkind et
al.~\citep{elk-lac-pet:b:structured,elk-lac:approval-ci-vi}). They
only considered the AV rule and, somewhat surprisingly, found many of
their problems to be intractable.

Finally, the work of \citet{boe-fal-jan-kac:c:pb-robustness} is a
follow-up to the conference papers on which this one is
based. \citet{boe-fal-jan-kac:c:pb-robustness} focus on robustness in
participatory budgeting elections, giving both theoretical and
experimental results. The main difference between the participatory
budgeting setting and the multiwinner one is that while in the latter
we seek a committee of a fixed size, in the former each candidate
(project) has a possibly different price and we seek a committee of at
most a given total cost (see the survey of participatory budgeting
issues in computational social choice due to
\citet{rey-mal:t:pb-survey}).  \citet{boe-fal-jan-kac:c:pb-robustness}
consider a variant of the \textsc{Robustness Radius} problem where it
is possible to both add and remove approvals (they refer to it as
\textsc{Flip Bribery}) and their complexity analysis largely focuses
on the participatory budgeting variant of the AV rule (although they
also consider other rules).

Another paper that is very related to ours on the technical level is that of
\citet{jan-fal:c:ties-approval-voting}. There, the authors study the
problem of detecting ties in approval-based multiwinner voting. While
their motivation is very different, their proofs for GreedyCC and
GreedyPAV (or, more generally, greedy unit-decreasing Thiele rules;
see \Cref{sec:prelim}) are, in essence, extensions of ours.

Prior to the work of Bredereck et
al.~\cite{bre-fal-kac-nie-sko-tal:j:robustness}, Shiryaev et
al.~\cite{shi-yu-elk:c:robust-winners} studied the robustness of
single-winner elections. Specifically, they asked for the complexity
of the \textsc{Destructive Swap Bribery} problem, where the goal is to
change the winner of an election by making as few swaps of adjacent
candidates as possible (they considered the ordinal
setting). Kaczmarczyk and Faliszewski~\cite{kac-fal:j:dsb} also
studied a variant of this problem, where each swap involves a
specified candidate (the original winner of the
election).\footnote{This problem is known as \textsc{Destructive Shift
    Bribery}. The constructive variant was introduced by Elkind et
  al.~\cite{elk-fal-sli:c:swap-bribery}. \textsc{Shift Bribery} was
  studied deeply in a number of
  papers~\cite{bre-che-fal-nic-nie:j:prices-matter,bre-fal-nie-tal:j:committee-shift-bribery,zho-guo:c:parameterized-iterative-shift-bribery,fal-man-sor:j:approx-shift-bribery,mau-nev-rot-sel:j:iterative-shift-bribery}.}
What makes our problem different is the focus on multiwinner approval
elections (rather than single-winner, ordinal ones) and looking for
any change to the winning committee set (rather than having a specific
candidate win).  Other \textsc{Destructive Bribery}
problems~\cite{fal-hem-hem-rot:j:llull} were studied under the name
\textsc{Margin of
  Victory}~\cite{mag-riv-she-wag:c:stv-bribery,car:c:stv-bribery,xia:c:margin-of-victory,dey-nar:c:margin-of-victory}
and focused on finding the smallest number of votes that need to be
modified to change the election result (so, as opposed to the previous
problems, in the \textsc{Margin of Victory} problems only the number
of modified votes matters, irrespective of the extent to which they
are changed).  The idea is that if modifying a few votes can lead to
changing the electon result, then it is possible that the election was
manipulated and a detailed audit may be necessary.

Regarding the counting variant of the \textsc{Robustness Radius}
problem, so far relatively few authors considered counting variants of
election-related problems. Some notable examples of works that do
study counting problems include those of Hazon et
al.~\cite{haz-aum-kra-woo:j:uncertain-election-outcomes}, Wojtas et
al.~\cite{fal-woj:c:counting-control}, and
\citet{boe-bre-fal-nie:c:counting-swap-bribery}, all of which focus on
single-winner elections (however these certainly are not the only
examples of papers studying counting problems on elections; for
example, \citet{boe-fal-jan-kac:c:pb-robustness} also consider
counting variants of their robustness problems). Hazon et
al.~\cite{haz-aum-kra-woo:j:uncertain-election-outcomes} considered
the problem of computing the probability that a given candidate is an
election winner, provided that for each voter there is a probability
distribution over the votes he or she may cast. Wojtas et
al.~\cite{fal-woj:c:counting-control} considered the problem of
computing the probability that a given candidate is a winner, provided
that a given number of candidates or voters is added/removed from the
election. \citet{boe-bre-fal-nie:c:counting-swap-bribery} studied
robustness of single-winner voting rules for ordinal elections to
swapping pairs of candidates in the votes.

To conclude, we mention that election robustness is a broad term that
is studied in other contexts as well, e.g., in electronic voting and
political science, but we omit the discussion of this literature as it
is quite distant from our work.

\section{Preliminaries}\label{sec:prelim}
We assume basic familiarity with computational complexity theory and
we point to the textbook of Papadimitriou~\cite{pap:b:complexity} for
more details on this topic. We mention that $\sharpp$ is a class of
counting problems, analogous to $\np$. The main feature is that
instead of asking if some computation path accepts, the answer is the
number of such accepting paths.  For an integer $t$, we write $[t]$ to
mean the set $\{1, \ldots, t\}$.

\subsection{Elections and Rules}
An \emph{approval-based election} $E=(C,V)$ consists of a set of
candidates $C = \{c_1, \ldots, c_m\}$ and a collection of voters
$V = (v_1, \ldots, v_n)$, where each voter $v_i \in V$ approves a
subset $A(v_i)$ of candidates from $C$ (so $|A(v_i)|$ is the number of candidates that $v_i$ approves).
For a candidate $c_j$, we write $V(c_j)$ to mean
the set of voters that approve $c_j$, and we write $\approval_E(c_j)$ to
mean the cardinality of this set. We sometimes refer to
$\approval_E(c_j)$ as the approval score of candidate $c_j$.

An \emph{approval-based multiwinner voting rule} \ggR{} is a function
that given an approval-based election and an integer $k \in [|C|]$
provides a family $\ggR{}(E, k)$ of size-$k$ committees, i.e., a
family of size-$k$ subsets of $C$, that tie for victory.  Some of our
rules---namely those that proceed by adding candidates to the
committee sequentially---will always output a single winning
committee, wheres others---namely those that associate committees with
scores---will sometimes output more winning committees.  In practical
applications, the latter rules need some sort of a tie-breaking
mechanism, but we omit this issue in our considerations.

Next, we describe our score-based rules. Let $E = (C,V)$ be an
election and fix committee size~$k$.  For each of the rules described
below, we define the score that it assigns to a given committee $S$ and
the rule selects all the committees with the highest score:
\begin{description}

\item[Approval Voting (AV).] Under the AV rule, the score of committee
  $S$ is defined as the sum of the approval scores of its members;
  formally we have:
  \[\textstyle \score_E^{\av}(S) = \sum_{c \in S} \approval_E(c).\]

\item[Satisfaction Approval Voting (SAV).] Under the SAV rule,
  introduced by Brams and Kilgour~\cite{bra-kil:b:sav}, each voter has
  a single point which he or she distributes among all the candidates
  that he or she approves.  For a candidate $c$, we refer to the value
  $\sum_{v \in V(c)}\frac{1}{|A(v)|}$ as the SAV score of $c$.  The SAV
  score of committee $S$ is the sum of the SAV scores of its members:
  \[
    \textstyle \score_E^{\sav}(S) = \sum_{c \in S}\left( \sum_{v \in V(c)}\frac{1}{|A(v)|} \right).
  \]
 
\item[Thiele Rules (Including CC and PAV).] A Thiele rule for the
  case of size-$k$ committees is defined through a vector
  $\omega = (\omega_1, \ldots, \omega_k)$ of non-negative real
  numbers. The score of committee $S$ under a Thiele rule specified by
  such a vector is defined as:
  \[
    \textstyle \score_E^{\omega\hbox{-}\av}(S) = \sum_{v \in V}\left( \sum_{i=1}^{|A(v) \cap S|} \omega_i \right).
  \]
  Examples of Thiele rules include the AV rule, defined using vectors
  of the form $\omega^\av = (1, \ldots, 1)$, the approval-based
  Chamberlin--Courant rule
  (CC)~\cite{cha-cou:j:cc,pro-ros-zoh:j:proportional-representation},
  defined using vectors of the form $\omega^\ccav = (1, 0, \ldots, 0)$,
  and the Proportional Approval Voting rule (PAV), defined using
  vectors of the form
  $\omega^\pav = (1, \nicefrac{1}{2}, \ldots, \nicefrac{1}{k})$.  This
  family of rules was introduced by Thiele~\cite{thi:j:pav}, who in
  particular introduced the PAV rule.
  We are mostly interested in Thiele rules with polynomial-time
  computable vectors
  $\omega = (\omega_1, \omega_2, \ldots, \omega_k)$, such that
  $1 = \omega_1 > \omega_2 \geq \omega_3 \geq \cdots \geq
  \omega_k$. We refer to such rules as \emph{unit-decreasing Thiele
    rules}. CC and PAV are unit-decreasing, but AV is not.  
\end{description}

Let us discuss CC, PAV, and Thiele rules in some more detail.
Intuitively, under the CC rule a voter assigns score $1$ to a
committee if he or she approves at least one member of this committee,
and assigns score $0$ otherwise (so a voter is satisfied with a
committee if it contains a candidate that the voter can see as his or
her \emph{representative}). Under the PAV rule, the appreciation of a
voter for a committee increases with the number of its members that he
or she approves, but the more members a voter approves, the smaller is
the marginal value of approving further ones.
The form of the vector used by PAV makes the rule suitable for
parliamentary
elections~\cite{azi-bri-con-elk-fre-wal:j:justified-representation}
and, indeed, it can be seen as a generalization of the D'Hondt
apportionment method used in parliamentary elections in many
countries~\cite{bri-las-sko:j:apportionment}.

There are polynomial-time algorithms for computing the winning
committees under the AV and SAV
rules~\cite{azi-gas-gud-mac-mat-wal:c:approval-multiwinner}, but the
problem of deciding if there is a committee with at least a given
score is $\np$-hard for
CC~\cite{pro-ros-zoh:j:proportional-representation,bou-lu:c:chamberlin-courant},
PAV~\cite{sko-fal-lan:j:collective,azi-gas-gud-mac-mat-wal:c:approval-multiwinner},
and many other Thiele rules (but there is a number of algorithmic
workarounds, including approximation
algorithms~\cite{sko-fal-sli:j:multiwinner,byr-sko-sor:t:pav-approx,dud-man-mar-sor:c:thiele-pav-approx}
and $\fpt$
algorithms~\cite{bet-sli-uhl:j:mon-cc,fal-sko-sli-tal:j:top-k-counting}),
and polynomial-time algorithms for restricted
domains~\cite{bet-sli-uhl:j:mon-cc,sko-yu-fal-elk:j:mwsc,pet-lac:j:spoc,sor-wil-xu:c:thiele-structured}.

In addition to CC and PAV, we are also interested in their sequential
variants, GreedyCC and GreedyPAV, defined by their greedy
algorithms. Below we define the general form of these algorithms, for a given election $E= (C,V)$,
committee size $k$, and a vector $\omega = (\omega_1, \ldots, \omega_k)$:
\begin{description}
\item[Greedy Thiele Rules (Including GreedyCC and GreedyPAV).] We
  start with an empty committee $W = \emptyset$ and perform $k$
  iterations, where in each iteration we extend $W$ with a single
  candidate $c$ that %
  maximizes the value
  $\score_E^{\omega\hbox{-}\av}(W \cup \{c\}) -
  \score_E^{\omega\hbox{-}\av}(W)$.  If several candidates satisfy
  this condition then we break the tie according to a given
  tie-breaking order defined over the candidates (and which is part of
  the input). We output $W$ as the unique winning committee.
  Analogously to the case of Thiele rules, we also speak of
  unit-decreasing greedy Thiele rules when the $\omega$ vectors
  satisfy $1 = \omega_1 > \omega_2 \geq \cdots \geq \omega_k$ and are
  polynomial-time computable.
\end{description}
We refer to the Greedy Thiele rules defined by $\omega^\ccav$ and
$\omega^\pav$ as GreedyCC and GreedyPAV, respectively.
When analyzing an $i$-th iteration of these algorithms, for each
candidate $c$ we refer to the value
$\score_E^{\omega\hbox{-}\av}(W \cup \{c\}) - \score_E^{\omega\hbox{-}\av}(W)$ as the
score of $c$.  For GreedyCC, we imagine that as soon as a candidate is
included in the committee, all the voters that approve him or her are
removed (indeed, these voters would not contribute positive score to any
further candidates).

We are also interested in the Phragm{\'e}n rule (or, more specifically,
in Phragm{\'e}n's sequential rule, but we use the shorter name in this paper). The Phragm{\'e}n rule proceeds according to the
following scheme ($E = (C,V)$ is the input election and~$k$ is the
committee size):
\begin{description}
\item[Phragm\'en.] Initially, we have committee $W = \emptyset$. The
  voters start with no money, but they receive it at a constant rate
  (so, at each time point $t \in \mathbb{R}$, $t \geq 0$, each voter
  has in total received money of value $t$). At every time point for
  which there is a candidate $c$ not included in $W$ who is approved
  by voters that jointly have one unit of money, this candidate is
  ``purchased.''  That is, candidate~$c$ is added to $W$ and the
  voters that approve him or her have all their money reset to $0$
  (i.e., they pay for $c$). If several candidates can be purchased at
  the same time, we consider them in a given tie-breaking order (given
  as part of the input). The process continues until $W$ reaches size
  $k$ or all the remaining candidates have approval score zero (in
  which case we extend $W$ according to the tie-breaking order). We
  output $W$ as the unique winning committee.
\end{description}
Similarly to PAV, Phragm\'en provides committees that ensure
proportional representation of the
voters~\cite{san-elk-lac-fer-fis-bas-sko:c:pjr}.  For a detailed
discussion of all our rules (including an alternative definition of
Phragm{\'e}n), we point the reader to the survey of Lackner and
Skowron~\cite{lac-sko:b:multiwinner-approval}.  Faliszewski et
al.~\cite{fal-sko-sli-tal:b:multiwinner-voting} and
\citet{bau-fal-rot-sko:b:multiwinner} offer a general overview of
multiwinner voting.

\subsection{Robustness Notions}
Next, we formally introduce the notions of robustness level and
robustness radius for approval multiwinner voting rules, by adapting
the approach of Bredereck et
al.~\cite{bre-fal-kac-nie-sko-tal:j:robustness} from the world of
ordinal multiwinner elections.  There are three types of operations
on approvals that we are interested in.  By the \ggadd{} operation, we
mean adding an approval for a given candidate in a given vote, by the
\ggremove{} operation we mean removing an approval from some candidate
in a given vote, and by the \ggswap{} operation we mean moving an
approval from one candidate to another in a given vote.

For each operation $\mathrm{Op} \in \{\ggadd,\ggremove,\ggswap\}$ we
say that a multiwinner rule $\ggR$ is $\ell$-{Op-Robust} if performing
a single operation of type {Op}, in a given election, leads to
replacing at most $\ell$ members of the winning committee. Formally,
we have the following definition (the robustness level notion used by
Bredereck et al.~\cite{bre-fal-kac-nie-sko-tal:j:robustness} is the
same, except that they considered ordinal elections and the operation
of swapping adjacent candidates in a preference order).

\begin{definition}\label{defRL}
  Let \ggR{} be a voting rule and let $\mathrm{Op} \in \{\ggadd{}$,
  $\ggremove{}$, $\ggswap{}\}$ be an operation type.  We say that the
  \emph{\ggop-robustness level} of \ggR{} is $\ell$ (\ggR{} is
  {$\ell$}-$\mathrm{Op}$-robust) if $\ell$ is the smallest number such that
  for each election $E = (C,V)$, each committee size $k$, $k\leq |C|$,
  and each election $E'$ obtained from $E$ by applying a single
  $\mathrm{Op}$ operation the following holds:
  For each committee $W \in \ggR{}(E,k)$ there exists a committee
  $W' \in \ggR{}(E',k)$ such that $|W \cap W'| \geq k - \ell$.
\end{definition}
By a slight abuse of notation, we will often speak of
$k$-{Op-Robustness} to mean that a single Op operation may lead to
replacing the whole committee.
We are also interested in Bredereck et
al.'s~\cite{bre-fal-kac-nie-sko-tal:j:robustness} \textsc{Robustness
  Radius} problem.%

\begin{definition}\label{defRR}
  Let \ggR{} be a voting rule and let
  $\ggop{} \in \{\ggadd{}, \ggremove{}, \ggswap{}\}$ be an operation
  type.  In the \ggrr{\ggR{}-\ggop{}} problem we are given an approval
  election $E=(C,V)$, a committee size $k$, an integer $B$, and we ask
  if by applying a sequence of $B$ operations of type $\ggop$ it
  is possible to obtain an election $E'$ such that
  $\ggR{}(E, k) \neq \ggR{}(E', k)$.
\end{definition}

In addition to the above problems, we also consider their counting
variants (for the $\ggadd$ and $\ggremove$ operations), where we
consider the number of ways in which $B$ operations of a given type
can be performed so that the result of the election \emph{does not
  change}.\footnote{Since it is easy to compute the total number of
  ways of performing $B$ operations of a given type, from the
  computational complexity point of view it is irrelevant if we count
  the cases where the result changes or does not change , but the
  latter approach simplifies our proofs.}

\subsection{Matrix Notation} In many of our proofs we represent
elections and operations performed on them as matrices, where each
entry is either $\ggx$, $+$, $-$, or an empty space. Each such matrix
specifies two elections, the original election $E$ and the election
$E'$, obtained from $E$ by applying some operations.  Each row
corresponds to a voter and each column corresponds to a candidate.  A
symbol in row $i$ and column $j$ indicates whether voter $v_i$
approves candidate $c_j$. Symbol~$\ggx$ indicates that a given
candidate is approved in both elections, symbol~$-$ indicates that the
candidate is approved in $E$ but not in $E'$, and symbol~$+$ indicates
\begin{figure}
  \centering
\arraycolsep=1.0pt\def\arraystretch{1.0}
$\begin{array}{c|ccc}
    & c_1 & c_2 & c_3 \\
    \hline
  v_1 & \ggx &  - &  + \\
  v_2 & \ggx &    &    \\
  \end{array}$
        \caption{The matrix notation for specifying elections and
          operations on them.}
        \label{fig-matrix_def}
\end{figure}
that the candidate is approved in $E'$ but not in $E$ (so we use $+$
for the \textsc{Add} operation, $-$ for the \textsc{Remove} operation,
and both a $+$ and a $-$ for the \textsc{Swap} operation).

We present an example matrix in Figure~\ref{fig-matrix_def}. The
elections that it specifies have three candidates, $c_1$, $c_2$ and
$c_3$, and two voters, $v_1$ and $v_2$. In the original election,
$v_1$ approves $c_1$ and $c_2$, and $v_2$ approves $c_1$. In the
election obtained after applying the specified operation (in this case
it is a \textsc{Swap} operation), $v_1$ approves $c_1$ and $c_3$, and
$v_2$'s vote remains unchanged.

\section{Robustness Levels}\label{sec:robustness}

In this section we analyze the robustness levels of our rules. We
first consider the AV rule, which is $1$-robust for each operation
type.

\begin{proposition}\label{propAV}
$\ggav$  is $\mathrm{1\hbox{-}Op\hbox{-}Robust}$ for each $\mathrm{Op} \in \{\ggadd{}, \ggremove{}, \ggswap{}\}$.
\end{proposition}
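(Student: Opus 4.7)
The plan is to exploit the fact that under AV the winning committees are precisely the size-$k$ subsets of candidates with the largest approval scores, i.e., those obtained by choosing all candidates with $\approval_E(c) > \theta$ and then completing the committee with some candidates having $\approval_E(c) = \theta$, where $\theta$ is the $k$-th largest approval score. Each of the three operations perturbs the vector of approval scores in a very controlled way: \textsc{Add} and \textsc{Remove} change one coordinate by $\pm 1$, and \textsc{Swap} changes two coordinates, one by $+1$ and one by $-1$. So the threshold $\theta$ can move by at most $1$, and only candidates whose scores land exactly on the new threshold can enter or leave the committee.

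I would treat the three operations in turn. For \textsc{Add} on a candidate $c^*$: given $W \in \ggR(E,k)$, if $c^* \in W$ then no score in $W$ has dropped and scores outside $W$ are unchanged, so $W \in \ggR(E',k)$ and we can take $W' = W$. If $c^* \notin W$, either $\approval_{E'}(c^*) \le \min_{c\in W}\approval_E(c)$, in which case $W$ remains winning, or $\approval_{E'}(c^*) = \min_{c\in W}\approval_E(c)+1$; in the latter case we set $W' = (W\setminus\{c_{\min}\})\cup\{c^*\}$ for some minimum-score member $c_{\min}\in W$, check that $W'$ is top-$k$ in $E'$, and note $|W\cap W'| = k-1$. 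The \textsc{Remove} argument is symmetric. For \textsc{Swap} of an approval from $a$ to $b$ in some vote, I would split on whether $a,b \in W$ (four subcases) and combine the \textsc{Add} and \textsc{Remove} analyses: if at most one of $a,b$ crosses the threshold, the corresponding single swap of committee members yields $W'$ with $|W\cap W'|\geq k-1$; the key observation is that since $a$'s score dropped by $1$ and $b$'s rose by $1$, they cannot simultaneously force two replacements.

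Finally, to show that the robustness level is exactly $1$ (not $0$), I would give a minimal example: two candidates $c_1,c_2$ with a single voter approving only $c_1$ and $k=1$; adding an approval for $c_2$ yields $\ggav(E',1)=\{\{c_1\},\{c_2\}\}\neq\{\{c_1\}\}=\ggav(E,1)$, so the committee can genuinely change.

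The only mildly delicate step is the \textsc{Swap} case, where two coordinates move at once and one must be careful about ties at the threshold; in particular one has to verify that the new $W'$ actually is a top-$k$ committee in $E'$ rather than merely a high-scoring one. The rest is just bookkeeping on the top-$k$ order statistic of the approval-score vector.
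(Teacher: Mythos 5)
Your main argument is correct and is essentially the same as the paper's: a case analysis on whether the affected candidate(s) belong to $W$, driven by the observation that a single operation moves at most one candidate across the $k$-th-largest-score threshold (the paper phrases this via the lowest-scoring member of $W$ rather than the order statistic, but it is the same reasoning, including the four-way split and the key observation in the \textsc{Swap} case).

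However, your tightness example does not do what you claim. Under Definition~\ref{defRL}, showing that the robustness level is not $0$ requires an election $E$, an operation, and a committee $W \in \ggav(E,k)$ such that $W \notin \ggav(E',k)$. In your example ($c_1$ approved by the single voter, $k=1$, add an approval for $c_2$) the original winner $\{c_1\}$ is still a winning committee of $E'$ --- the set of winning committees grew, but every $W \in \ggav(E,1)$ satisfies $|W \cap W'| \geq k$ for the choice $W'=W$, so the $\ell=0$ condition holds for this instance. A correct witness is, e.g., two voters with $v_1$ approving $c_1$ and $v_2$ approving $c_2$, $k=1$: both singletons win in $E$, but after adding an approval for $c_2$ to $v_1$'s vote only $\{c_2\}$ wins, so for $W=\{c_1\}$ the best achievable overlap is $k-1$. (The paper's own proof silently omits this lower-bound direction, so you were right to address it --- just with the wrong example.)
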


\begin{proof}
  Let $E$ be an \ggav{} election with some winning committee $W$ and
  $E'$ an election obtained from $E$ by adding an approval to some
  candidate $c$.  If $c$ is in $W$ then the committee $W$ is still
  winning in $E'$ (so there is no change to the winning committee).
  If $c$ is not in $W$ then let $d$ be some member of $W$ with the lowest
  AV score. Now, either the committee $W$ is still winning in the
  election $E'$ or committee $(W \setminus \{d\}) \cup \{c\}$ is
  winning.

  Similarly, in the \ggremove{} case, w.l.o.g., we let $W$ be some
  winning committee and $c$ be the candidate that has one of its
  approvals removed.  When $c \notin W$ the removal doesn't change the
  score of committee $W$, hence this committee is still winning.
  When, on the other hand, $c \in W$, then it might turn out that the
  score of the winning committee can be improved by replacing $c$ with
  a $d \notin W$ where $\approval_E(c) = \approval_E(d)$ (i.e., where
  $d$ has the same approval score as $c$ had prior to the removal).
  Any other member of $W$ will still be the member of the committee,
  because no other votes were changed.  Therefore \ggav{} is
  \ggrobust{1}{Remove}.

  Finally, in the \ggswap{} case, w.l.o.g., we let $W$ be some winning
  committee and let the swap operation remove approval from $c$ and
  add it to $d$.  If $c, d \in W$, then swap could cause at most one
  of them to be replaced in $W$ (approval score of $d$ increases, so
  it stays in $W$, but $c$ might get replaced).  If $c, d \notin W$,
  then swap could have at most one member of~$W$ be replaced by $d$
  (whose score increased by $1$).  If $c \notin W$ and $d \in W$, then
  the committee $W$ will still be winning, since it will have more
  approvals than before.  Lastly, if $c \in W$ and $d \notin W$ then,
  again, only one member of $W$ could be replaced. To see this, let
  $x$ denote the lowest scoring member of $W$ prior to the swap and
  note that if there are two new candidates to enter $W$, then both
  their scores must be greater than the score $x$, because all the
  candidates in $W$ (apart from possibly $c$) have their scores after
  the swap not lower than $x$.  But the only candidate not belonging
  to the winning committee that can have such a score greater than $x$
  after the swap is $d$.  Thus, we conclude that \ggav{} is
  \ggrobust{1}{Swap}.
\end{proof}

\begin{figure}
  \centering
\begin{subfigure}{0.45\textwidth}
  {\footnotesize 
  \centering
  \arraycolsep=1.0pt\def\arraystretch{1.0}
  $\begin{array}{c|ccccccc}
        & a_1 & a_2 & a_3 & b_1 & b_2 & b_3 \\
    \hline
    v_1 &\ggx &\ggx &\ggx &  +  &     &     \\
     v_2 &     &     &     &\ggx &\ggx &\ggx \\
     \multicolumn{1}{c}{} & & & & & &
   \end{array}$
  \caption{An election for adding an approval, for $k=3$.}
  \label{fig-SAV-add-RL}
      }
    \end{subfigure}\quad
\begin{subfigure}{0.45\textwidth}
	{\footnotesize 
	\centering
	\arraycolsep=1.0pt\def\arraystretch{1.0}
	$\begin{array}{c|ccccccccccccccc}
		    & s  & a_1 & a_2 & b_1 & c_1 & c_2 & c_3 & c_4 & c_5 & d_1 & d_2 & d_3 & d_4 & d_5 \\
		\hline
		v_1 & -  &\ggx &\ggx &     &     &     &     &     &     &     &     &     &     &     \\
		v_2 &    &     &     &\ggx &\ggx &\ggx &\ggx &\ggx &\ggx &     &     &     &     &     \\
		v_3 &    &     &     &\ggx &     &     &     &     &     &\ggx &\ggx &\ggx &\ggx &\ggx \\
	\end{array}$
	\caption{An election for removing an approval, $k=2$.}
	\label{fig-SAV-remove-RL}
	}
      \end{subfigure}
      \caption{Approval election matrices illustrating the proof of \Cref{pro:sav-add}.}
\end{figure}

The case of SAV is more intricate as the rule is $k$-{\ggadd}-robust
and $k$-{\ggremove}-robust, but $1$-{\ggswap}-robust. Intuitively,
adding or removing a single approval can affect the scores of many
candidates so, in consequence, it can lead to replacing the whole
committee. On the other hand, swapping an approval affects the scores
of at most two candidates.

\begin{proposition}\label{pro:sav-add}
$\sav$ is $k$-{\ggadd}-robust, $k$-{\ggremove}-robust, and $1$-{\ggswap}-robust.
\end{proposition}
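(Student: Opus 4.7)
The plan has three independent pieces, one per operation type, and I would treat them in the order $\ggadd$, $\ggremove$, $\ggswap$.

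For $k$-$\ggadd$-robustness, I would generalize the two-voter instance of \Cref{fig-SAV-add-RL} to arbitrary $k$. Take candidates $a_1,\ldots,a_k,b_1,\ldots,b_k$ and voters $v_1,v_2$ with $A(v_1)=\{a_1,\ldots,a_k\}$ and $A(v_2)=\{b_1,\ldots,b_k\}$. Each candidate has SAV score $\tfrac{1}{k}$, so every size-$k$ subset is a winning committee; in particular $W=\{a_1,\ldots,a_k\}\in\sav(E,k)$. Adding an approval for $b_1$ to $v_1$ raises $|A(v_1)|$ to $k+1$, so each $a_i$ drops to $\tfrac{1}{k+1}$, each $b_j$ with $j\ge 2$ stays at $\tfrac{1}{k}$, and $b_1$ rises to $\tfrac{1}{k+1}+\tfrac{1}{k}$. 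The unique winning committee becomes $W'=\{b_1,\ldots,b_k\}$, disjoint from $W$, which shows the robustness level is at least $k$ (and trivially at most $k$).

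For $k$-$\ggremove$-robustness, I would extend \Cref{fig-SAV-remove-RL} in the same spirit. Let $v_1$ approve $\{s,a_1,\ldots,a_k\}$, giving these candidates score $\tfrac{1}{k+1}$. For each $i\in[k-1]$, introduce two voters each approving $b_i$ together with $2k+1$ fresh filler candidates (all disjoint across voters), so their approval size is $2(k+1)$. Then $b_i$ also reaches score $\tfrac{1}{k+1}$, while each filler contributes only $\tfrac{1}{2(k+1)}$, below the threshold. Consequently $W=\{s,b_1,\ldots,b_{k-1}\}$ is one of many winning committees. Removing $s$ from $v_1$ reduces $|A(v_1)|$ to $k$, lifting each $a_j$ to $\tfrac{1}{k}$ (strictly above $\tfrac{1}{k+1}$), collapsing $s$ to $0$, and leaving everyone else unchanged. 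Hence $\sav(E',k)=\{\{a_1,\ldots,a_k\}\}$, again disjoint from $W$.

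For $1$-$\ggswap$-robustness, I would follow the template of \Cref{propAV}. A $\ggswap$ in voter $v$'s ballot leaves $|A(v)|$ unchanged, so only two SAV scores move: $s(c)$ decreases by $\delta=\tfrac{1}{|A(v)|}$ and $s(d)$ increases by $\delta$. I would fix a winning $W\in\sav(E,k)$, split into the four cases based on whether each of $c,d$ lies in $W$, and in each case construct a $W'\in\sav(E',k)$ with $|W\cap W'|\ge k-1$ by picking top-$k$ candidates under the new scores and breaking ties at the threshold in favour of members of $W$. The main obstacle is the case $c\in W$, $d\notin W$, where one might fear that $c$'s drop and $d$'s rise together force two swaps in the committee. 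The key observation that rules this out is that for every $y\in W\setminus\{c\}$ and every $z\notin W\cup\{d\}$ we have $s_{E'}(z)=s_E(z)\le s_E(y)=s_{E'}(y)$, because these scores are all unchanged and $y$ was already ranked at or above $z$ in $E$. Hence no non-$W$ candidate other than $d$ can strictly overtake any $W$-member other than $c$, so at most one replacement into the winning committee is forced, and choosing the tie-breaking appropriately yields $|W\cap W'|\ge k-1$.
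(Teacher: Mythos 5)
Your proposal is correct and follows essentially the same route as the paper: the \ggadd{} construction is identical to \Cref{fig-SAV-add-RL}, the \ggremove{} construction is a minor variant of \Cref{fig-SAV-remove-RL} (you give each $b_i$ two dedicated filler-padded voters where the paper shares the $b_i$'s between two voters with large filler sets, but the mechanism---tying $s$, the $a_j$'s, and the $b_i$'s at score $\tfrac{1}{k+1}$ and then removing $s$ to lift the $a_j$'s to $\tfrac{1}{k}$---is the same), and the \ggswap{} case is the same reduction to the AV-style case analysis that the paper invokes, with the key $c\in W$, $d\notin W$ observation spelled out slightly more explicitly than in the paper's AV proof.
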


\begin{proof}

  We start with the case of adding an approval. Let us fix some
  committee size $k$ and let $A = \{a_1, \ldots, a_k\}$ and
  $B = \{b_1, \ldots, b_k\}$ be two disjoint sets of candidates. We
  form an election with candidate set $C = A \cup B$ and two voters,
  $v_1$ and $v_2$, such that $v_1$ approves all the candidates from
  $A$ and $v_2$ approves all the candidates from $B$. This election is
  illustrated at \Cref{fig-SAV-add-RL}.  Each candidate has SAV score
  $\nicefrac{1}{k}$ and, in particular, $A$ is one of the winning
  committees. Yet, if we add an approval for $b_1$ to the vote of
  $v_1$, then the scores of all the candidates in $A$ decrease to
  $\nicefrac{1}{k+1}$, whereas the scores of the candidates in $B$
  remain unchanged (or increase, as in the case of $b_1$). In
  consequence, $B$ becomes the unique winning SAV committee, which
  witnesses $k$-{\ggadd}-robustness of $\sav$.

  We proceed with the \ggremove{} case.  We form an election
  $E=(\mathcal{C},V)$ with the committee size~$k$ and candidate set
  $\mathcal{C} = \{s\} \cup A \cup B \cup C \cup D$, where
  $ A=\{a_1,...,a_{k}\}, B=\{b_1,...,b_{k-1}\}, C=\{c_1,...,c_{k+3}\},
  D=\{d_1,...,d_{k+3}\}$.  The voter collection is
  $V=(v_1, v_2, v_3)$, where $v_1$ approves only candidates in
  $A\cup\{s\}$, $v_2$ approves candidates in $B \cup C$, and $v_3$
  approves candidates in $B \cup D$. The election is illustrated in
  \Cref{fig-SAV-remove-RL}.
  The winning committee score for election $E$ is $\frac{k}{k+1}$
  because the greatest score a candidate may get is $\frac{1}{k+1}$
  and there are at least $k$ such candidates coming from
  $A \cup B \cup \{s\}$.  To see this, note the following.  The
  candidates in $A \cup \{s\}$ have $\frac{1}{k+1}$ score each, coming
  from $v_1$.  The candidates in $B$ have
  $\frac{2}{k-1+k+3}=\frac{1}{k+1}$ score each, coming from voters
  $v_2$ and $v_3$.  Finally, the candidates in $C$ and $D$ have a
  lower score of $\frac{1}{k-1+k+3}=\frac{1}{2(k+1)}$.

  Let $W$ be a winning committee $B \cup \{s\}$.  We obtain an
  election $E'$ by removing $v_1$'s approval for~$s$.  Note that we
  now have $k$ candidates in $A$ with increased score, equal
  $\frac{1}{k}$, no other candidate's score changes ($v_1$ does not
  approve any other candidates) and their scores remain at most
  $\frac{1}{k+1}$.  It follows that the only winning committee of $E'$
  is $A$, with score $\frac{k}{k}=1$.  This means that no candidate in
  previously winning committee $W$ belongs to any of the $E'$'s
  winning committees. Hence, we get $k$-{\ggremove}-robustness of
  $\sav$.

  Finally, the case of swapping an approval is very different from
  adding or removing approvals for the $\sav$ rule.  It follows from
  the fact that a \ggswap{} can only happen within a single vote.
  Thus, unlike \ggadd{} and \ggremove{}, it doesn't change the actual
  score per single approval because the count of approval per voter
  doesn't change.  The proof of \ggrobustness{1}{SAV} follows exactly
  the same reasoning as the one for \ggrobustness{1}{AV}.
\end{proof}

For the case of unit-decreasing Thiele rules (including CC and PAV),
we find that they all are $k$-robust for each operation type.  Our
proof uses a single election (parametrized by the committee size) for
all the considered rules.  The result is expected as Bredereck et
al.~\cite{bre-fal-kac-nie-sko-tal:j:robustness} have shown that for
$\np$-hard rules (with polynomial-time computable scoring functions)
one cannot expect constant robustness values,\footnote{Formally, their
  result applies to the ordinal setting, but it is straightforward to
  adapt their reasoning to the approval setting.}  but it is appealing
to have a single, simple election which witnesses that all the rules
in the class are $k$-robust (and, indeed, the result of Bredereck et
al.~\cite{bre-fal-kac-nie-sko-tal:j:robustness} only says that the
robustness level is not constant, but does not suffice to claim that
it is $k$).

\begin{figure}
	{\footnotesize 
	\begin{subfigure}[b]{0.3\textwidth} 
	\centering
	\arraycolsep=1.0pt\def\arraystretch{1.0}
	$\begin{array}{c|ccccccc}
		  & a_1 & a_2 & a_3 & b_1 & b_2 & b_3 \\
		  \hline
		v^s     & +    &    &    &    &    &\\
		v_{1,1} & \ggx &    &    &\ggx&    &    \\
		v_{1,2} & \ggx &    &    &    &\ggx&    \\
		v_{1,3} & \ggx &    &    &    &    &\ggx\\
		v_{2,1} &      &\ggx&    &\ggx&    &    \\
		v_{2,2} &      &\ggx&    &    &\ggx&    \\
		v_{2,3} &      &\ggx&    &    &    &\ggx\\
		v_{3,1} &      &    &\ggx&\ggx&    &    \\
		v_{3,2} &      &    &\ggx&    &\ggx&    \\
		v_{3,3} &      &    &\ggx&    &    &\ggx\\
		\end{array}$
	\caption{\textsc{Add}}
	\label{fig-overlapping_add}
	\end{subfigure}
	\begin{subfigure}[b]{0.3\textwidth}
	\centering
	$\arraycolsep=1.0pt\def\arraystretch{1.0}
		\begin{array}{c|ccccccc}
		  & a_1 & a_2 & a_3 & b_1 & b_2 & b_3 \\
		  \hline
		v^s     & \ggx &    &    & -  &    &\\
		v_{1,1} & \ggx &    &    &\ggx&    &    \\
		v_{1,2} & \ggx &    &    &    &\ggx&    \\
		v_{1,3} & \ggx &    &    &    &    &\ggx\\
		v_{2,1} &      &\ggx&    &\ggx&    &    \\
		v_{2,2} &      &\ggx&    &    &\ggx&    \\
		v_{2,3} &      &\ggx&    &    &    &\ggx\\
		v_{3,1} &      &    &\ggx&\ggx&    &    \\
		v_{3,2} &      &    &\ggx&    &\ggx&    \\
		v_{3,3} &      &    &\ggx&    &    &\ggx\\
		\end{array}$
	\caption{\textsc{Remove}}
	\label{fig-overlapping_remove}
	\end{subfigure}
	\begin{subfigure}[b]{0.3\textwidth} 
	\centering
	\arraycolsep=1.0pt\def\arraystretch{1.0}
	$\begin{array}{c|ccccccc}
		  & a_1 & a_2 & a_3 & b_1 & b_2 & b_3 \\
		  \hline
		v^s     &  +   &    &    & -  &    &\\
		v_{1,1} & \ggx &    &    &\ggx&    &    \\
		v_{1,2} & \ggx &    &    &    &\ggx&    \\
		v_{1,3} & \ggx &    &    &    &    &\ggx\\
		v_{2,1} &      &\ggx&    &\ggx&    &    \\
		v_{2,2} &      &\ggx&    &    &\ggx&    \\
		v_{2,3} &      &\ggx&    &    &    &\ggx\\
		v_{3,1} &      &    &\ggx&\ggx&    &    \\
		v_{3,2} &      &    &\ggx&    &\ggx&    \\
		v_{3,3} &      &    &\ggx&    &    &\ggx\\
		\end{array}$
	\caption{\textsc{Swap}}
	\label{fig-overlapping_swap}
	\end{subfigure}
	\caption{Approval election matrices illustrating proofs of
          \Cref{prop-RL-UnitDecreasing} and
          \Cref{prop-RL-GreedyUnitDecreasing}, for $k=3$.}
	\label{fig-unit-decreasing-RL}
	}
\end{figure}

\begin{proposition}\label{prop-RL-UnitDecreasing}
  Every unit-decreasing Thiele rule is $k$-Op-robust for each
  $\mathrm{Op} \in \{\ggadd{},$ $\ggremove{},$ $\ggswap{}\}$.
\end{proposition}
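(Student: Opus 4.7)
The plan is to construct, for each committee size $k \geq 2$ and each operation type $\mathrm{Op}$, a single election (of the form depicted in \Cref{fig-unit-decreasing-RL}) where one application of $\mathrm{Op}$ turns a winning committee into a disjoint one. The construction has two disjoint candidate sets $A = \{a_1,\ldots,a_k\}$ and $B = \{b_1,\ldots,b_k\}$, a collection of $k^2$ ``base'' voters $v_{i,j}$ for $i,j \in [k]$ (each $v_{i,j}$ approves exactly the pair $\{a_i, b_j\}$), and one special voter $v^s$ whose approvals before and after the operation are specified separately for each $\mathrm{Op}$. The case $k = 1$ is then trivial, since $\ell \leq k$ always.

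The key computation is that, for any size-$k$ committee $W$ with $p$ members from $A$ and $k-p$ members from $B$, the total $\omega$-AV contribution from just the base voters equals
\[
k^2 - p(k-p)(1 - \omega_2).
\]
I will verify this by classifying each pair $(i,j)$ according to whether $a_i$ and $b_j$ lie in $W$: both contribute $1 + \omega_2$ (this happens for $p(k-p)$ pairs), exactly one contributes $1$ (for $p^2 + (k-p)^2$ pairs), and neither contributes $0$ (for $p(k-p)$ pairs). Since the rule is unit-decreasing, $1 - \omega_2 > 0$, so this quantity is strictly maximized over $p$ at $p = 0$ and $p = k$, attaining the value $k^2$ exactly at $W = B$ and $W = A$ respectively.

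I will then handle the three operations by computing $v^s$'s contribution in each of them. For \ggadd{}, $v^s$ approves nothing initially and gains an approval of $a_1$: before the operation both $A$ and $B$ tie at the maximum base score $k^2$, whereas afterwards $A$ alone achieves $k^2 + 1$ (any committee with $p \in (0,k)$ and $a_1 \in W$ loses $p(k-p)(1-\omega_2) > 0$ from the base and gains only $1$ from $v^s$). For \ggremove{}, $v^s$ initially approves $\{a_1, b_1\}$ and loses its approval of $b_1$: both $A$ and $B$ score $k^2 + 1$ and are winning before, and after the operation $v^s$ approves $\{a_1\}$ only, so (by the same reasoning as above) $A$ is the unique winner. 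For \ggswap{}, $v^s$ swaps an approval from $b_1$ to $a_1$, making $B$ the unique winner before (by symmetry with the post-state of \ggadd{}) and $A$ the unique winner after. In every case, $W = B$ is winning in the original election, while after the operation the only winning committee is $A$; since $A \cap B = \emptyset$, the robustness level is at least $k$, and combined with the trivial upper bound this yields the claim.

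The main technical obstacle is verifying that no ``mixed'' committee with $0 < p < k$ ties with or beats $A$ in the post-operation election. This reduces to the inequality $p(k-p)(1-\omega_2) > 1$ being comfortable enough: in the post-state $v^s$ approves only $\{a_1\}$, so the $v^s$-bonus to any committee is at most $1$, while the base-score deficit is strictly positive for every $p \in \{1,\ldots,k-1\}$ (since $\omega_2 < 1$), so $A$'s total of $k^2 + 1$ cannot be matched. A small subtlety arises for the pre-operation state of \ggremove{}, where $v^s$ approves both $a_1$ and $b_1$ and a mixed committee containing both can tie $A$ and $B$ (as happens, e.g., for $\mathrm{PAV}$ with $k=2$); but this is harmless, because our argument only needs $B$ to be among the winners before the operation and $A$ to be the unique winner after.
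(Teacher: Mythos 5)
Your construction and case analysis coincide with the paper's own proof: the same election on $A \cup B$ with the $k^2$ voters $v_{i,j}$ and the special voter $v^s$, the same three instantiations of $v^s$ for the three operations, and the same logic ($B$ winning before, $A$ uniquely winning after). Your explicit formula $k^2 - p(k-p)(1-\omega_2)$ for the base-voter score is a clean way to package the counting that the paper does verbally, and your \ggadd{} and \ggswap{} cases are correct as written.

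Your \ggremove{} pre-state, however, has a genuine gap. You claim that a mixed committee containing both $a_1$ and $b_1$ ``can tie $A$ and $B$'' and that this is harmless because $B$ stays among the winners. In fact such a committee can \emph{strictly beat} $B$: its score in $E''$ is $k^2 - p(k-p)(1-\omega_2) + 1 + \omega_2$, which exceeds $k^2+1$ whenever $\omega_2 > p(k-p)(1-\omega_2)$; taking $p=1$, this happens for every unit-decreasing rule with $\omega_2 > \frac{k-1}{k}$. For instance, with $\omega_2 = 0.9$ and $k=2$ the committee $\{a_1,b_1\}$ scores $5.8$ while $B$ scores $5$, so $B \notin \ggR(E'',k)$; the unique winner of $E''$ then shares $a_1$ with the unique winner $A$ of $E'$, and the example no longer witnesses $k$-\ggremove{}-robustness. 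The proposition remains true---one fix is to replicate each base voter $v_{i,j}$ some $L > \frac{1+\omega_2}{1-\omega_2}$ times so that the mixing penalty dominates any bonus $v^s$ can confer---but as written your argument only covers rules with $\omega_2 \le \frac{k-1}{k}$ (which does include CC and PAV, where at worst a harmless tie occurs). For what it is worth, the paper's own one-line assertion that $B \in \ggR(E'',k)$ glosses over exactly the same point, so you have reproduced the paper's proof, imprecision included; but having written down the score formula explicitly, you were one inequality away from noticing that the ``tie'' can be a strict loss.
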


\begin{proof}
  We start with the case of adding an approval.  Let us fix committee
  size $k$ and let $\ggR$ be some unit-decreasing Thiele rule which
  for the case of committees of size $k$ uses vector
  $\omega = (1, \omega_2, \ldots, \omega_k)$.  The main idea of the
  proof is similar to that used for \Cref{pro:sav-add}.  Let
  $A=\{a_1,...,a_k\}$ and $B=\{b_1,...,b_k\}$ be two disjoint sets of
  candidates. We form an election $E=(C,V)$ where $C=A \cup B$ and $V$
  contains the following voters (see \Cref{fig-overlapping_add} for
  illustration):
  \begin{enumerate}
  \item For each $i, j \in [k]$, there is a voter $v_{i,j}$ who
    approves candidates $a_i$ and $b_j$.
  \item There is a single voter $v^s$ with an empty approval set.
  \end{enumerate}
  Let $S$ be some size-$k$ committee. We note that
  $\score_E^{\omega\hbox{-}\av}(S) \leq k^2$. This is so, because
  under a unit-decreasing Thiele rule each approval for a given
  candidate $c$ can contribute at most one point toward the score of a
  committee that contains $c$. In our election each candidate is
  approved by exactly $k$ voters and the committee is of size $k$.  In
  particular, committee $B$ has score $k^2$ and is among the winning
  committees.

  We obtain election $E'$ by adding an approval for $a_1$ in the vote
  of $v^s$.  As a consequence, the highest possible score of a
  size-$k$ committee in $E'$ is $k^2+1$, and committee $A$ indeed
  obtains this score. We now show that $A$ is the unique winning
  committee in $E'$.
  First, we note that $a_1$ must belong to each winning committee as
  committees without $a_1$ obtain at most $k^2$ points.  Second, let
  us assume for the sake of contradiction that some committee $D$
  contains both $a_1$ and some candidate $b_\ell \in B$.  By
  construction, voter $v_{1,\ell}$ approves both $a_1$ and $b_\ell$
  and---by definition of a unit-decreasing Thiele rule---contributes
  less than $2$ points to the score of $D$. All the other voters
  provide exactly $k^2-1$ approvals for the members of $D$ and even if
  each of these approvals contribute a single point to the score of
  $D$, in total this score would be lower than $k^2+1$. Thus $A$ is
  the only winning committee in $E'$.
  As $B$ is among the winning committees in $E$ and is disjoint from
  $A$, we have that our rule is $k$-{\ggadd}-robust.
  
  The proof for $\ggR$ being \ggrobust{k}{Remove} is similar.  Let
  $E''$ be an election constructed identically as $E$ in the preceding
  case, except that voter $v^s$ approves both $a_1$ and $b_1$ (see
  \Cref{fig-overlapping_remove} for an illustration). First, we see
  that $B \in \ggR(E'',k)$ (with maximum score of $k^2+1$). Next, we obtain election $E'$ by
  removing $b_1$'s approval from $v^s$. Since we have that $B \in \ggR(E'',k)$, $\ggR(E',k)=\{A\}$, and
  $|B \cap A|=0$, we have that $\ggR$ is \ggrobust{k}{Remove}.

  Finally, to prove that $\ggR$ is \ggrobust{k}{Swap}, it suffices to
  note that $B$ is the only winning committee in an election obtained
  from $E'$ by swapping an approval from $a_1$ to $b_1$ in $v^s$ (see
  \Cref{fig-overlapping_swap}).
\end{proof}

The above results, as well as those of Bredereck et
al.~\cite{bre-fal-kac-nie-sko-tal:j:robustness}, give some intuitions
regarding robustness levels that we may expect from multiwinner rules.
On the one hand, simple, polynomial-time computable rules that focus
on individual excellence tend to have robustness levels equal to~$1$
(this includes, e.g., AV in the approval setting, and a number of
rules in the ordinal one; SAV is a notable exception).
Indeed,  Bredereck et
al.~\cite[Theorem~6]{bre-fal-kac-nie-sko-tal:j:robustness} have shown that  if a
rule selects a committee with the highest score, this score is easily
computable, and the rule's robustness level is bounded by a constant,
then some winning committee can be computed in polynomial time.
On the other hand, more involved rules that focus on proportionality
or diversity---in particular those $\np$-hard to compute---tend to
have robustness levels equal to the committee size.
However, regarding rules that form the committee sequentially, so far
there was only one data point: Bredereck et
al.~\cite{bre-fal-kac-nie-sko-tal:j:robustness} have shown that single
transferable vote (STV; a well-known rule for the ordinal setting,
focused on proportionality) has robustness levels equal to the
committee size. We provide further such examples by showing that
greedy variants of unit-decreasing Thiele rules, as well as
Phragm{\'e}n, also have robustness levels equal to the committee
size. We first show that robustness-levels based on adding and
removing approvals are equal for resolute rules (i.e., rules that
always output a unique winning committee, as is the case for our
sequential ones).

\begin{proposition}\label{prop:add-rem-robustness}
  Let $\calR$ be a resolute multiwinner voting rule, and let $\ell$ be
  a positive integer.  $\calR$ is $\ell$-\textsc{Add}-robust if and
  only if it is $\ell$-\textsc{Remove}-robust.
\end{proposition}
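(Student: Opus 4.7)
The plan is to exploit the observation that the \ggadd{} and \ggremove{} operations are mutually inverse: if $E'$ is obtained from $E$ by adding an approval for candidate $c$ in vote $v$, then $E$ is obtained from $E'$ by removing the approval for $c$ from $v$, and vice versa. Since $\calR$ is resolute, the set $\calR(E,k)$ is a singleton $\{W\}$ and $\calR(E',k)$ is a singleton $\{W'\}$, so the condition in Definition~\ref{defRL}, namely that for each $W \in \calR(E,k)$ there is some $W' \in \calR(E',k)$ with $|W \cap W'| \geq k - \ell$, collapses to the single symmetric requirement $|W \cap W'| \geq k - \ell$.

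I would formalize this as follows. First I would assume $\calR$ is $\ell$-\textsc{Add}-robust and show it is $\ell$-\textsc{Remove}-robust. Fix an arbitrary election $E$ and committee size $k$, let $E'$ be any election obtained from $E$ by a single \ggremove{} operation (say deleting the approval of voter $v$ for candidate $c$), and let $\{W\} = \calR(E,k)$ and $\{W'\} = \calR(E',k)$. Observe that $E$ is exactly the election obtained from $E'$ by the single \ggadd{} operation that gives voter $v$ back its approval of $c$. Applying the \ggadd{}-robustness hypothesis to the pair $(E', E)$, with unique winner $W'$ of $E'$, we conclude there exists a committee in $\calR(E,k) = \{W\}$ -- necessarily $W$ itself -- satisfying $|W' \cap W| \geq k - \ell$. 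This is precisely the \ggremove{}-robustness condition for $(E, E')$, so it holds for all such pairs.

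The converse direction is completely analogous: given any \ggadd{} transition $E \to E'$, view it as the \ggremove{} transition $E' \to E$ and apply the \ggremove{}-robustness hypothesis, using again that both $\calR(E,k)$ and $\calR(E',k)$ are singletons and that intersection cardinality is symmetric. There is no real obstacle here; the only point to be careful about is invoking resoluteness explicitly, since without it the quantifier structure of Definition~\ref{defRL} is not symmetric in $E$ and $E'$ (in general, the existential over $W' \in \calR(E',k)$ for each $W \in \calR(E,k)$ does not flip freely), and the argument would not go through for irresolute rules.
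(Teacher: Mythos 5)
Your proof is correct and follows essentially the same route as the paper's: both arguments observe that a \textsc{Remove} transition from $E$ to $E'$ is an \textsc{Add} transition from $E'$ to $E$, and use resoluteness to make the condition of Definition~\ref{defRL} symmetric in the two elections. Your explicit remark on why resoluteness is needed (the quantifier structure of the definition does not flip for irresolute rules) is a welcome clarification, and since your two directions transfer the defining condition both ways, the minimality clause in the definition (that $\ell$ is the \emph{smallest} such number) is also taken care of, just as in the paper.
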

\begin{proof}
  Let us fix committee size $k$ and a resolute multiwinner rule
  $\calR$. Further, assume that $\calR$ is $\ell$-\textsc{Add}-robust
  for some integer $\ell$. We will show that it also is
  $\ell$-\textsc{Remove}-robust. To see this, let us consider two
  elections, $E$ and $E'$, where $E'$ is obtained from $E$ by removing
  an approval and both elections contain at least $2k$
  candidates. Let~$W$ be the unique winning committee in $\calR(E,k)$
  and $W'$ be the unique winning committee in $\calR(E',k)$. Since
  $\calR$ is $\ell$-\textsc{Add}-robust, we know that $W$ and $W'$
  differ by at most $\ell$ candidates (it suffices to apply the
  defintion of $\ell$-\textsc{Add}-robustness, but with the roles of
  $E$ and $E'$ reversed). Similarly, we see that there are two such
  elections whose winning committees differ by exactly $\ell$
  candidates.
  By applying analogous reasoning, we see that if $\calR$ is
  $\ell$-\textsc{Remove}-robust then it is also
  $\ell$-\textsc{Add}-robust.
\end{proof}

To establish robustness levels of our sequential rules, we use the
same constructions as for full-fledged Thiele rules.

\begin{proposition}\label{prop-RL-GreedyUnitDecreasing}
  Greedy unit-decreasing Thiele rules are \ggrobust{k}{Op} for all
  \ggop{} in $\{\ggadd{}$, $\ggremove{}$, $\ggswap{}\}$.
  The same holds for Phragm\'en.
\end{proposition}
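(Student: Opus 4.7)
The plan is to reuse the three elections constructed in the proof of \Cref{prop-RL-UnitDecreasing} (illustrated in Figure~\ref{fig-unit-decreasing-RL}) and to argue that, under a suitably chosen tie-breaking order---say $b_1 \succ b_2 \succ \cdots \succ b_k \succ a_1 \succ \cdots \succ a_k$---each of the rules picks committee $B$ before the operation and committee $A$ after it. Since greedy Thiele rules and Phragmén are resolute, \Cref{prop:add-rem-robustness} reduces the \ggremove{} case to the \ggadd{} case, so I only need to treat \ggadd{} and \ggswap{} directly.

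For a greedy unit-decreasing Thiele rule with vector $\omega = (1, \omega_2, \ldots, \omega_k)$ satisfying $\omega_2 < 1$, I would trace the greedy process on the \ggadd{} election of Figure~\ref{fig-overlapping_add}. Before the operation, in round one every candidate is approved by exactly $k$ voters, so all have marginal score $k$ and tie-breaking picks $b_1$; once $b_1$ enters the committee, each $a_i$ is ``punished'' through voter $v_{i,1}$ and drops to marginal score $k - 1 + \omega_2 < k$, while each remaining $b_j$ keeps score $k$ because its $k$ approvers $v_{i,j}$ still approve no committee member. Iterating this reasoning picks $b_1, b_2, \ldots, b_k$ in order and produces committee $B$. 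After adding the approval of $a_1$ by $v^s$, candidate $a_1$ has marginal score $k+1$ and strictly beats everyone else in round one; a completely analogous inductive argument then picks $a_2, \ldots, a_k$ one by one, yielding committee $A$, which is disjoint from $B$.

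For Phragmén I would carry out analogous time-based bookkeeping. In the pre-\ggadd{} election every candidate is approved by exactly $k$ voters and hence becomes purchasable simultaneously at time $t = 1/k$; the tie-breaking order picks $b_1$, after which the $k-1$ candidates $b_2, \ldots, b_k$ remain purchasable at the same $t = 1/k$ (the approver sets $\{v_{i,j}\}_{i \in [k]}$ are disjoint across different~$j$, so resetting one column leaves the other columns untouched), while each $a_i$ has lost an approver $v_{i,1}$ and its total drops below~$1$; iterating yields committee $B$. After adding $v^s$'s approval of $a_1$, candidate $a_1$ gains an extra approver and is purchased alone at $t = 1/(k+1)$; the subsequent purchase-times are $1/k$ for each $a_i$ with $i \neq 1$ (none of whose approvers were reset) and $(k+2)/(k(k+1)) > 1/k$ for each $b_j$ (since the reset approver $v_{1,j}$ delays it), so $a_2, \ldots, a_k$ are bought before any $b_j$, producing committee $A$.

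The \ggswap{} case (Figure~\ref{fig-overlapping_swap}) is essentially free: the post-swap election coincides with the post-\ggadd{} election and therefore yields committee $A$, while the pre-swap election (in which $v^s$ approves only $b_1$) yields committee $B$ by the same analysis as the post-\ggadd{} election, with the roles of $a_1$ and $b_1$ interchanged. The main technical obstacle is the Phragmén part, where I have to verify that simultaneous purchases at $t = 1/k$ can actually be scheduled in the order $b_1, b_2, \ldots, b_k$ (this uses column-disjointness of the approver sets) and that the two critical purchase-times $1/k$ and $(k+2)/(k(k+1))$ are correctly ordered---both reduce to routine but bookkeeping-heavy comparisons of linear expressions in $t$.
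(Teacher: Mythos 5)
Your proposal is correct and follows essentially the same route as the paper's proof: the same two elections from \Cref{prop-RL-UnitDecreasing}, the same tie-breaking order, the reduction of \ggremove{} to \ggadd{} via \Cref{prop:add-rem-robustness}, and the same observation that the pre-swap election (with $v^s$ approving $b_1$) still elects $B$ while being one swap away from $E'$. Your bookkeeping is slightly more explicit than the paper's (e.g., the marginal score $k-1+\omega_2$ and the purchase time $\frac{k+2}{k(k+1)}$ for the $b_j$'s under Phragm\'en), but the argument is the same.
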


\begin{proof}
  Let $\calR$ be a greedy unit-decreasing Thiele rule and consider
  elections $E$ and $E'$ as in the proof of
  \Cref{prop-RL-UnitDecreasing}, with committee size $k$ and
  tie-breaking order
  $b_1 \pref \cdots \pref b_k \pref a_1 \pref \cdots \pref a_k$
  (recall \Cref{fig-overlapping_add}). One can verify that
  $\calR(E,k) = B$ and $\calR(E',k) = A$ (for the former, we make use
  of the tie-breaking order to note that $b_1$ is selected in the
  first iteration, followed by $b_2$, $b_3$, and so on; for the
  latter, $a_1$ is selected in the first iteration because it has
  highest score, and then it is followed by $a_2$, $a_3$, and so
  on). This shows \ggrobustness{k}{\ggadd} and
  \ggrobustness{k}{\ggremove} of~$\calR$ (for the latter, we invoke
  \Cref{prop:add-rem-robustness}). To see \ggrobustness{k}{\ggswap},
  note that if in $E$ we extend $v^s$ with an approval for $b_1$ then
  $B$ still wins, but this election is one approval-swap away from
  $E'$, where $A$ wins.

  The same construction works for Phragm{\'e}n. Indeed, we see that
  $B$ is winning in election $E$: Each candidate is approved by $k$
  voters so, at time $\frac{1}{k}$, first $b_1$ is purchased (due to
  the tie-breaking order) and then $b_2$, $b_3$, and so on, up to
  $b_k$ (due to the tie-breaking order and because each of these
  candidates is approved by a disjoint group of $k$ voters).
  Similarly, $A$ is winning in $E'$: First $a_1$ is selected at time
  $\frac{1}{k+1}$ (as $a_1$ has $k+1$ approvals), and then
  $a_2, \ldots, a_k$ are selected at time $\frac{1}{k}$ (because each
  of them is approved by a disjoint group of $k$ voters and no
  candidate in $B$ is approved by voters who have a unit of money in
  total at this time; indeed, for each $b_i \in B$, at time
  $\frac{1}{k+1}$ one of the voters supporting $b_i$ spent his or her
  money on $a_1$).  This shows \ggrobustness{k}{\ggadd} and
  \ggrobustness{k}{\ggremove}. We obtain \ggrobustness{k}{\ggswap} in
  the same way as for the greedy unit-decreasing Thiele rules.
\end{proof}

\section{Complexity of the Robustness Radius Problems}

In this section we focus on the complexity of the
\ggR-\textsc{Op-Robustness-Radius} problems for adding, removing, and
swapping approvals. For the rules where our decision problems are
polynomial-time solvable, we also consider the complexity of the
respective counting problems.

\subsection{The AV Rule: Polynomial-Time Algorithms}
We start by considering the AV rule. In this case we have
polynomial-time algorithms for all our decision and counting problems
(recall that we do not consider the \textsc{Swap} operation for the
counting problems).

\begin{theorem}\label{propRR-av}
  \ggrr{\ggav{}-\ggop{}} is in P for each
  $\ggop{} \in \{\ggadd{}, \ggremove{}, \ggswap\}$.
\end{theorem}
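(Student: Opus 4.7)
The plan is to exploit the fact that under AV the set $\ggR(E,k)$ is completely determined by the sorted list of approval scores. Let $s^\star$ denote the $k$-th largest approval score in $E$ and partition the candidates into $T = \{c : \approval_E(c) > s^\star\}$, $M = \{c : \approval_E(c) = s^\star\}$, and $L = \{c : \approval_E(c) < s^\star\}$; then $\ggR(E,k)$ is exactly the family of committees $T \cup S$ with $S \subseteq M$ of size $k - |T|$. In particular, $\ggR(E,k) = \ggR(E',k)$ iff the pair $(T, M)$ is the same in $E$ and $E'$, so deciding \ggrr{\ggav-\ggop} reduces to deciding whether the $(T,M)$ partition can be disturbed using at most $B$ operations of type \ggop.

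For \ggadd{}, which can only raise scores, I would argue that the minimum-cost ways to disturb $(T,M)$ are either to pick some $d \in L$ and raise its score up to $s^\star$, costing $s^\star - \approval_E(d)$ Adds (feasible because $n \geq s^\star$, so there are enough non-approving voters), or, when the winning committee is non-unique (that is, $|M| > k - |T|$), to pick a $d \in M$ with $\approval_E(d) < n$ and spend a single Add to promote $d$ into $T$. The answer is \textsf{YES} iff the minimum of these values is at most $B$, and is computed in polynomial time after sorting the candidates by approval score. \ggremove{} is handled symmetrically: lower some $c \in T$ down to $s^\star$, or lower a competing $d \in M$ by one.

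For \ggswap{}, each operation simultaneously decrements one candidate's score and increments another's within a single vote, and is feasible only inside voters who approve the decrementee but not the incrementee. The approach is to iterate over all ordered pairs $(c,d)$ of candidates and, for each, compute the minimum number of $c \mapsto d$ swaps that changes the $(T,M)$ partition -- either by pushing $d$ up to (or past) $s^\star$, or by pulling $c$ down to (or past) $s^\star$ -- subject to the feasibility cap $|V(c) \setminus V(d)|$. Taking the minimum across the $O(m^2)$ pairs yields the answer in polynomial time.

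The main obstacle will be the \ggswap{} case, where each operation is two-sided and carries a voter-level feasibility constraint; I will need to enumerate the correct threshold-crossing events per pair and justify that restricting to single-pair strategies is without loss of generality, since any sequence of swaps can be decomposed and the minimum cost for changing $(T,M)$ is realized by concentrating swaps on the single candidate whose score crosses $s^\star$. The \ggadd{} and \ggremove{} cases then reduce to straightforward case analysis on the $(T,M,L)$ partition.
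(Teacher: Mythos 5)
Your \textsc{Add} algorithm is essentially the paper's (bring $c_{k+1}$ up to $\approval_E(c_k)$, or spend one approval to break a tie, with the degenerate all-scores-equal-$n$ fallback), and that part is fine. The problem is the linchpin of your framework: the claimed equivalence ``$\ggR(E,k)=\ggR(E',k)$ iff $(T,M)$ is unchanged'' is false in the direction you need. Take $k=2$ with scores $z_a>z_b>z_c\ge\cdots$; the unique winning committee is $\{a,b\}$ with $(T,M)=(\{a\},\{b\})$. Lowering $a$'s score to $z_b$ changes the partition to $(\emptyset,\{a,b\})$, yet the family of winning committees is still $\{\{a,b\}\}$: a candidate migrating between $T$ and $M$ while the committee stays unique and unchanged does not change the outcome. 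This breaks your \ggremove{} case concretely. ``Lower some $c\in T$ down to $s^\star$'' never changes the result when the committee is unique (and is dominated by the one-removal tie-break when it is not), while ``lower a competing $d\in M$ by one'' only works in the tied case. You are missing the actual minimum-cost event, which is to lower a bottom committee member all the way down to the top score \emph{outside} the committee, at cost $\approval_E(c_k)-\approval_E(c_{k+1})$ — this is what the paper does, and it is not the mirror image of your \textsc{Add} events. As stated your rule gives false positives (it accepts a cheap $T\to M$ migration that changes nothing) and false negatives (if $T=\emptyset$ because all committee members are tied at $s^\star$ but $z_{k+1}<s^\star$, your minimum is over an empty set even though the result can be changed). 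You also need the degenerate case $s^\star=0$.

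The \ggswap{} case inherits the same defect and adds one more. In the unique-committee case the true optimum is $\lceil(\approval_E(c_k)-\approval_E(c_{k+1}))/2\rceil$, achieved by swaps that simultaneously lower $c_k$ and raise $c_{k+1}$ inside votes approving $c_k$ but not $c_{k+1}$; since $|V(c_k)\setminus V(c_{k+1})|\ge\approval_E(c_k)-\approval_E(c_{k+1})$, the feasibility cap is never binding for this particular pair, which is what rescues a single-pair argument. Your two per-pair events — push $d$ up to $s^\star$, or pull $c$ down to $s^\star$ — do not include this meet-in-the-middle strategy: the first costs the full gap $g$ and the second costs $0$ and accomplishes nothing, so your algorithm would reject yes-instances with $\lceil g/2\rceil\le B<g$. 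The matching lower bound (each swap closes the score gap between any fixed member/non-member pair by at most $2$) is the piece of justification your ``WLOG'' sketch would actually need. The paper avoids all of this by a direct case analysis on $\approval_E(c_k)$ versus $\approval_E(c_{k+1})$ rather than on the $(T,M)$ partition.
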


\begin{proof}
  We first describe a polynomial-time algorithm for deciding
  \ggrr{\ggav{}-\ggadd{}}.  Consider an election $E = (C,V)$, let
  $m = |C|$ be the number of candidates, let $n = |V|$ be the number
  of voters, and let $k$ be the committee size.  We have
  $C = \{c_1, \ldots c_m\}$ and we assume that
  $\approval_E(c_1) \geq \approval_E(c_2) \geq \cdots \geq
  \approval_E(c_m)$. We can perform $B$ operaitons. There are three
  cases:
  \begin{enumerate}
  \item If $\approval_E(c_{k})>\approval_E(c_{k+1})$, then the first
    $k$ candidates form the only winning committee.  In order to
    change election result with the smallest number of \ggadd{}
    operations, we need to add approvals to $c_{k+1}$ until its score
    equals the score of $c_k$. Thus, we accept if
    $B \geq \approval(c_k)-\approval(c_{k+1})$, and otherwise we
    reject.

  \item If $\approval_E(c_{k})=\approval_E(c_{k+1}) < n$, then there
    exists some winning committees to which $c_k$ does not belong (and
    $c_{k+1}$ belongs).  It is enough to add one approval to $c_{k}$
    to ensure that $c_k$ is part of all the winning committees, thus
    changing the election result.  Therefore, we can accept if
    $B \geq 1$. Otherwise, we reject.

  \item If $\approval_E(c_{k})=\approval_E(c_{k+1}) = n$, then
    consider the smallest $t > k$ such that
    $n = \approval_E(c_{t})>\approval_E(c_{t+1})$. If such a $t$
    exists then we proceed as in the first case from this enumerated
    list, with $t$ taking the role of $k$. If such a $t$ does not
    exist, then it means that all the voters approve all the
    candidates and it is impossible to change the election result by
    adding approvals.
  \end{enumerate}

  A polynomial-time algorithm for \ggrr{\ggav{}-\ggremove{}} can be
  constructed analogously.  The differences are that in the first and
  second cases we remove approvals from $c_k$, in the second case
  instead of assuming $\approval_E(c_{k+1}) < n$ we assume
  $\approval_E(c_{k+1}) > 0$, and the third case occurrs when
  $\approval_E(c_{k}) = \approval_E(c_{k+1}) = 0$ (in this case we can
  change the result by removing all approvals from the candidate among
  $c_1, \ldots, c_k$ that has the smallest non-zero number of them; if
  all of them have zero approvals then it is impossible to change the
  result).

  For \ggrr{\ggav{}-\ggswap{}} we also proceed similarly.  If
  $\approval_E(c_k) > \approval_E(c_{k+1})$ then there are voters who
  approve $c_k$ but not $c_{k+1}$; we use them to move approvals from
  $c_k$ to $c_{k+1}$ until
  $\approval_E(c_{k+1}) \geq \approval_E(c_{k})$ and we accept if this
  happens within $B$ swap operations; we reject otherwise. If
  $\approval_E(c_k) = \approval_E(c_{k+1})$ then to change the result
  it suffices to move a single approval to or from either $c_k$ or
  $c_{k+1}$ (or between them). The only situation where this is
  impossible is if each voter either approves all candidate or none of
  them, but in this case no swap is possible at all.
\end{proof}
\begin{theorem}
  The problem of counting the number of ways in which $B$ approvals
  can be added (removed) without changing the set of \ggav{} winning
  committees is in FP.
\end{theorem}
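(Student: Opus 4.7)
The plan is to parameterize each add-operation by its aggregate effect on the per-candidate approval scores and then count, via dynamic programming, the parameter tuples that preserve the \ggav{} winning set. The key structural fact I will use is that under \ggav{} the winning committees are determined entirely by the multiset of approval scores: letting $\alpha_j = \approval_E(c_j)$ and $\theta$ be the $k$-th largest of these values, the candidates split into $A = \{c_j : \alpha_j > \theta\}$, $T = \{c_j : \alpha_j = \theta\}$, and $L = \{c_j : \alpha_j < \theta\}$, and the winning committees are exactly $\{A \cup T' : T' \subseteq T,\ |T'| = k - |A|\}$. Whenever this family has more than one member the partition $(A, T, L)$ is uniquely recoverable from it, so preservation of the winning set is equivalent to the existence of a new threshold $\theta'$ such that $A$ stays strictly above $\theta'$, $T$ lies exactly at $\theta'$, and $L$ remains strictly below.

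I will encode each add-operation, i.e., each size-$B$ set of non-approvals, by a tuple $(d_1, \ldots, d_m)$ of per-candidate addition counts with $\sum_j d_j = B$ and $0 \leq d_j \leq n - \alpha_j$; the number of operations realizing a given tuple equals $\prod_j \binom{n - \alpha_j}{d_j}$. Combined with the preservation condition, this forces all $c_j \in T$ to receive the same number $d_T$ of new approvals (so $\theta' = \theta + d_T$), and imposes $d_j \geq \max(0, \theta + d_T - \alpha_j + 1)$ for each $c_j \in A$ and $d_j \leq \theta + d_T - \alpha_j - 1$ for each $c_j \in L$. The algorithm will iterate over $d_T \in \{0, 1, \ldots, \lfloor B/|T| \rfloor\}$, multiply in the $T$-contribution $\binom{n - \theta}{d_T}^{|T|}$, and evaluate the remaining contribution from $A \cup L$ by a standard dynamic program whose state is (candidate index, remaining budget) and whose transition at each candidate sums $\binom{n - \alpha_j}{d_j}$ over the admissible range of $d_j$. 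Since $B \leq mn$, the DP has polynomially many states and polynomial-time transitions, so the entire procedure runs in polynomial time. The removal variant will be symmetric: tuples satisfy $0 \leq d_j \leq \alpha_j$ with realization weight $\prod_j \binom{\alpha_j}{d_j}$, the shifted threshold is $\theta - d_T$, and the strict inequalities on $A$ and $L$ swap direction.

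The hard part will be the degenerate case in which $\calR(E,k)$ is a singleton $\{W^*\}$, because there the partition $(A, T, L)$ is no longer recoverable from the winning set alone. In this case preservation reduces to the condition that there exists a new threshold $\theta'$ with every $c \in W^*$ having new score at least $\theta'$, every $c \notin W^*$ having new score at most $\theta' - 1$, and the value $\theta'$ actually attained inside $W^*$. I plan to enumerate $\theta'$ over its $O(n)$ possible values and, for each, run an analogous DP that enforces the corresponding per-candidate interval constraints, using an inclusion-exclusion subtraction (tuples with $\min_{c \in W^*}(\alpha_c + d_c) \geq \theta'$ minus tuples with $\min_{c \in W^*}(\alpha_c + d_c) \geq \theta' + 1$) so that each preserving operation is counted exactly once; this adds only a polynomial factor on top of the DP. Beyond this case analysis the argument is routine bookkeeping on binomial-coefficient weights and requires no new ideas, yielding membership of the counting problems in \fp.
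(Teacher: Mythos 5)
Your proposal is correct and follows essentially the same route as the paper's proof: the same case split on whether the winning set is a singleton, the same per-candidate interval constraints with binomial-coefficient weights, the same knapsack-style dynamic program over (candidate index, remaining budget), and the same subtraction device (your $f(\theta')-f(\theta'+1)$ on the minimum inside $W^*$ is exactly the paper's $g(\ell,u)=f(\ell,u)-f(\ell+1,u)$). The only divergence is cosmetic: in the multiple-committee case you index by the common increment $d_T$ of the tied block rather than by the pair of resulting score levels, which slightly streamlines the bookkeeping but is the same argument.
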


\begin{proof}
  We focus on the case of adding approvals (the case of removing
  approvals is similar).
  Let $E=(C,V)$ be an approval-based election, where
  $C = \{c_1, \ldots, c_m\}$ and $V = (v_1, \ldots, v_n)$, and let $k$
  be the committee size.  For each candidate $c_i$, let
  $z_i = \approval_E(c_i)$ be its approval score. Without loss of
  generality, we assume that $z_1 \geq z_2 \geq \cdots \geq z_m$.  Our
  goal is to count the number of ways in which it is possible to add
  $B$ approvals to the election so that the result does not
  change.\footnote{We assume that it is, indeed, possible to add $B$
    approvals, i.e., $z_1 + \cdots + z_m \leq nm-B$.} We consider two
  cases, either there is a unique winning committee in $E$ or there
  are several winning committees.\bigskip

  \noindent\emph{Single Winning Committee.}\quad
  There is a unique winning committee, $W = \{c_1, \ldots, c_k\}$, in
  election $E$ exactly if $z_{k} > z_{k+1}$. We need to count the
  number of ways of adding $B$ approvals so that afterwards each member
  of $W$ has higher approval score than each candidate outside of
  $W$. To this end, for each $\ell, u \in [n]$ we define the following
  two values:
  \begin{enumerate}
  \item $f(\ell,u)$ is the number of ways of adding $B$ approvals so that
    each member of $W$ has approval score at least $\ell$ and each
    candidate outside of $W$ has approval score at most $u$.
      
  \item $g(\ell,u)$ is defined analogously to $f(\ell,u)$, except that we
    also require that at least one member of $W$ has approval score equal to $\ell$.
  \end{enumerate}

  Our algorithm should output the value
  $\sum_{\ell=1}^n g(\ell,\ell-1)$ as it covers all
  possibilities of adding $B$ approvals so that each member of $W$ has
  higher score than each candidate outside of $W$ without
  double-counting.
  To compute the values $g(\ell,u)$ in polynomial-time, we proceed as
  follows.  First, we note that for each $\ell,u \in[n]$, we have that
  $g(\ell,u) = f(\ell,u) - f(\ell+1,u)$.\footnote{For each
    $u \in [n]$, we take $f(n+1,u) =
    0$.} %
  This is so because subtracting $f(\ell+1,u)$ removes from
  consideration all the cases where all the members of $W$ have more
  than $\ell$ approvals.  Second, we note that values $f(\ell,u)$ can
  be computed using dynamic programming.
  \begin{lemma}\label{lem-f-polynomial}
    There is a polynomial-time algorithm that given $x,y \in [n]$
    computes the value $f(x,y)$ in polynomial-time.
  \end{lemma}
  \begin{proof}
    We assume that each candidate among $c_{k+1}, \ldots, c_m$ has at
    most $y$ approvals; otherwise we report that $f(x,y) = 0$.  For
    each candidate $c_i$, we compute values $a_i \leq b_i$ such that
    we need to add at least $a_i$ approvals to $c_i$, but no more than
    $b_i$. For each $c_i$ in the winning committee (i.e., if
    $1 \leq i \leq k$) we have $a_i = \max(0, x-\approval_E(c_i))$, to
    ensure that $c_i$ has at least $x$ approvals, and
    $b_i = n - \approval_E(c_i)$, because the only upper bound on the
    number of approvals we can add to $c_i$ stems from the number of
    the voters. For each $c_i$ among $c_{k+1}, \ldots, c_m$, we have
    $a_i = 0$, because it is fine not to add approvals at all, and
    $b_i = y - \approval_E(c_i)$ because $c_j$ cannot end up with more
    than $y$ approvals. For each $j \in [m]$ and
    $b \in [B] \cup \{0\}$, we define $h(j,b)$ to be the number of
    ways to add~$b$ approvals to candidates among $c_1, \ldots, c_j$
    so that each $c_i$ among them obtains between $a_i$ and $b_i$ new
    approvals. As a convention, we fix $h(j,b) = 0$ whenever $j$ is
    not in~$[m]$ or $[b]$ is not in~$[B] \cup \{0\}$, except that we
    set $h(0,0) = 1$. Next, we observe that for $j \in [m]$ and
    $b \in [B] \cup \{0\}$ we have:
    \[
      h(j,b) = \sum_{d=a_j}^{b_j} h(j-1,b-d) \cdot {n-\approval_E(c_j) \choose d}.
    \]
    Indeed, we need to add $d$ approvals to $c_j$, where $d$ is
    between $a_j$ and $b_j$, and we can do so in
    ${n-\approval_E(c_j) \choose d}$ ways. The remaining $b-d$
    approvals must be added to $c_1, \ldots, c_{j-1}$, which can be
    done in $h(j-1,b-d)$ ways. Using this formula and standard dynamic
    programming techniques, we obtain a polynomial-time algorithm for
    computing values $h$ and, hence, for computing $f(x,y)$.
  \end{proof}

  \noindent\emph{Several Winning Committees.}\quad
  There are several winning AV committees exactly when
  $z_k = z_{k+1}$.  Let $s$ and $t$ be two numbers such that
  $s \leq k$, $t \geq k+1$ and:
  \[
    z_1 \geq \cdots \geq z_{s-1} > z_s = \cdots = z_k = \cdots = z_t > z_{t+1} \geq \cdots z_m.
  \]
  In other words, candidates $c_s, c_{s+1}, \ldots c_t$ are the
  lowest-scoring members of the winning committees.

  To ensure that the set of winning committees does not change after
  adding $B$ approvals, we have to guarantee that:
  \begin{enumerate}
  \item Each of the candidates $c_1, \ldots, c_{s-1}$ has more approvals
    than each of the remaining candidates.
  \item Each of the candidates $c_s, c_{s+1}, \ldots, c_t$ has the same number of approvals.
  \item Each of the candidates $c_{t+1}, \ldots, c_m$ has fewer
    approvals than each of the remaining ones.
  \end{enumerate}
  This way we ensure that each winning committee consists exactly of
  the candidates $c_1, \ldots, c_{s-1}$ and $k-s+1$ arbitrarily chosen
  candidates from the set $\{c_s, \ldots, c_t\}$.  To count the number
  of ways of adding $B$ approvals so that our three conditions are
  fulfilled, we define function $g'(\ell,q,u)$ which gives the number
  of ways of adding $B$ approvals so that every candidate among
  $c_1, \ldots, c_{s-1}$ has at least $\ell$ approvals (and at least
  one of them has exactly $\ell$ of them), every candidate among
  $c_{s}, \ldots, c_{t}$ has exactly $q$ approvals, and every
  candidate among $c_{t+1}, \ldots, c_m$ has at most $u$
  approvals. Then the solution to our problem is
  $\sum_{\ell=1}^n \sum_{q < \ell} g'(\ell,q,q-1)$.  One can verify
  that it is possible to compute values of $g'$ using the same
  approach as in the case of a unique winning committee (indeed, one
  can compute $g'$ using $g$ from the previous case, by considering
  the candidates among $c_s, \ldots, c_t$ separately).

  For the case of removing approvals, we proceed identically, except
  that we change the definitions of functions $f$, $g$, and $g'$ to
  consider deleting approvals. Their computation, in particular the
  proof of \Cref{lem-f-polynomial} and the use of values $a_i$ and
  $b_i$, are analogous (modulo modifications due to removing approvals
  rather than adding them).
\end{proof}

\subsection{The SAV Rule: Easy Decision Problems,  Hard Counting Ones}

Let us now move on to the case of the SAV rule. The decision variants
of our three problems are still polynomial-time computable for this
rule.

\begin{theorem}\label{propRR-sav}
  \ggrr{\ggsav{}-\ggop{}} is in $\p$ for each
  $\ggop{} \in \{\ggadd{}, \ggremove{}, \ggswap\}$.
\end{theorem}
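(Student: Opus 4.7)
I would handle each operation $\mathrm{Op} \in \{\ggadd, \ggremove, \ggswap\}$ separately, using throughout the fact that a SAV committee is winning iff it consists of $k$ candidates with the top SAV scores (with ties broken arbitrarily among equally scored candidates). Thus the set of winning committees is fully determined by the multiset of candidate SAV scores together with their partition into equal-score classes, and changing it requires either flipping the strict order of some winning/losing candidate pair, creating a new tie across the rank-$k$ boundary, or breaking an existing such tie.

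For the \ggswap{} case, the argument closely mirrors the AV analysis: a single swap from $c$ to $d$ inside vote $v$ changes only $c$'s SAV score, by $-\tfrac{1}{|A(v)|}$, and $d$'s SAV score, by $+\tfrac{1}{|A(v)|}$, leaving every other candidate unaffected. For each ordered pair $(w,\ell)$ with $w$ currently among the top-$k$ and $\ell$ currently outside, I would compute the minimum number of swaps required to at least tie them (or, if they were already tied across the boundary, to break the tie) by greedily choosing swaps in votes that approve $w$ but not $\ell$, preferring those with the smallest $|A(v)|$, since each such swap shrinks the score gap $\score_E^{\sav}(\{w\}) - \score_E^{\sav}(\{\ell\})$ by exactly $\tfrac{2}{|A(v)|}$. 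Taking the minimum over all such pairs, plus the degenerate case of an existing tie (where one swap suffices), and comparing to $B$, decides the problem.

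For \ggadd{} and \ggremove{}, a single operation perturbs the SAV scores of every candidate approved by the affected voter, so one cannot simply greedily attack a single pair. My plan is again to enumerate the witness pair $(w,\ell)$ that we aim to equalize or flip and, for each such pair, compute the minimum number of operations via a small combinatorial subroutine. Each operation is fully parameterized by a voter $v$ and a target candidate $c$, giving $O(nm)$ distinct templates; its induced shift on the gap $\score_E^{\sav}(\{w\}) - \score_E^{\sav}(\{\ell\})$ is a rational number with polynomially bounded denominator and depends only on whether $v$ approves $w$, $\ell$, both, or neither, together with $|A(v)|$. Sorting these templates by the per-operation shift they contribute to the chosen gap yields a greedy subroutine, and the overall algorithm takes the minimum over all pairs.

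\textbf{Main obstacle.} The delicate point is the \ggadd{}/\ggremove{} case, because a single operation also perturbs the scores of third candidates and may inadvertently move some other candidate across the rank-$k$ boundary, or cancel the intended change by promoting a different candidate simultaneously. I would address this via an exchange argument: any operation whose side effects create an additional unwanted crossing can be replaced — without increasing the operation count — by one that targets only the chosen pair, perhaps by re-routing the approval to a voter with a more favorable $|A(v)|$ or to a candidate outside the critical block. Formalizing this reduction, and verifying that the resulting greedy subroutine is correct even in the presence of side effects, is where I expect most of the work to lie.
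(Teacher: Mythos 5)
Your high-level strategy---enumerate a witness pair $(w,\ell)$ straddling the rank-$k$ boundary and, for each pair, greedily minimize the number of operations needed to close the score gap---is exactly the paper's. But you have misidentified where the difficulty lies. The ``main obstacle'' you describe (side effects pushing a third candidate across the boundary) is not an obstacle at all: the problem only asks whether the \emph{set of winning committees changes}, and any crossing anywhere changes it. Concretely, the winning set changes if and only if \emph{some} pair $(w,\ell)$ with $w$ in the top $k$ and $\ell$ outside ends up with $\score_{E'}^{\sav}(\{\ell\})\geq\score_{E'}^{\sav}(\{w\})$; since you take the minimum over all pairs, an optimal solution is always captured by the subroutine for whichever pair it happens to flip, and unintended crossings can only help. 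The exchange argument you plan to develop is therefore unnecessary (and, as stated, probably not even true---the claimed replacement operation need not exist). The tie case (several winning committees) does need its own short treatment, as a single well-placed operation breaks a boundary tie, with a few degenerate subcases where no such operation exists.

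The real gap is in the subroutine itself: the per-operation shift on the gap is \emph{not} additive across multiple operations applied to the same voter, because each \ggadd{} or \ggremove{} changes $|A(v)|$ and hence the value of every subsequent operation on $v$. Your plan to sort templates by their shift computed in the original election and apply the top $B$ is provably wrong for \ggremove{}, where marginal returns are \emph{increasing}: take $B=3$, a voter $v_1$ approving $\ell$ and three other candidates (each single removal is statically worth $\tfrac{1}{3}-\tfrac{1}{4}=\tfrac{1}{12}$, but all three together raise $\ell$'s score by $\tfrac{3}{4}$), versus three voters each approving $\ell$ and two others (each removal statically worth $\tfrac{1}{6}$, totalling $\tfrac{1}{2}$); your static greedy picks the latter and loses. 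The paper instead proves structural lemmas: for \ggadd{}, at most one useful operation per voter exists (add an approval for $\ell$ to a voter not yet approving $\ell$), which makes a per-voter greedy sound; for \ggremove{}, it partitions voters into four groups by whether they approve $w$ and/or $\ell$, shows each group has a deterministic optimal removal order, and enumerates the $O(B^2)$ ways to split the budget among groups. You would need some such argument---adaptive recomputation alone is not obviously sufficient and is not what you propose. A smaller omission of the same flavor: for \ggswap{} you only consider votes approving $w$ but not $\ell$, whereas votes approving both (move $w$'s approval to a third candidate) and votes approving neither (move a third candidate's approval to $\ell$) also yield useful swaps and must be included.
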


\begin{proof}

  We start by analyzing the case where there is a single winning
  committee (we will address the multiple winning committees case
  later).  Let $E=(C,V)$ be the input election, $k$ be the committee
  size. For each candidate $c \in C$, we let
  $\score(c) = \sum_{v \in V(c)}\frac{1}{|A(v)|}$ by $c$'s individual
  score, and we let $X \subseteq C$ be the $k$ candidates with the top
  scores. We let $Y = C \setminus X$.  This implies that the scores of
  candidates in $X$ are strictly higher than the scores of candidates
  in $Y$. We will write $E'$ to refer to hypothetical election after
  modifications.  To change the election result, there must exist an
  $x \in X$ and a $y \in Y$ such that in $E'$
  $\score(y) \geq \score(x)$.  Otherwise, the winning set $X$ would not
  change.  For each \ggop{} in \ggadd{}, \ggremove{}, \ggswap{} we
  construct a polynomial-time algorithm deciding
  \ggrr{\textsc{SAV}-\ggop{}}.  The proof will follow by constructing
  a polynomial-time algorithm that decides whether for a given
  $x \in X, y \in Y$ we can ensure $\score(y) \geq \score(x)$ by
  applying at most $B$ approval changes.  Having such an algorithm, we
  construct our target algorithm by simply iterating through all the
  polynomially many choices of $x,y$ and accepting if any of them
  leads to acceptance.  Otherwise, we reject.  In each case (\ggadd{},
  \ggremove{}, \ggswap{}), we start by computing $X$ in polynomial
  time.

  In the \ggadd{} case, we can ignore adding approvals to voters who
  already approve $y$ because even if it might decrease $\score(x)$ it
  would decrease $\score(y)$ by the same amount.  We are left with the
  voters that do not approve $y$.  But for those voters, the only
  candidate worth adding approvals to is $y$ rather then some other
  candidate $z$.  Indeed, after adding an approval for candidate $y$
  the score of $y$ increases and the score of $x$ decreases (if $x$
  were also approved).  Adding an approval for candidate $z$ decreases
  the score of $x$ by the same amount, but does not increase the score
  of $y$.  Once some voter $v$ approves $y$, we again note that there
  is no point in adding further approvals to $v$.  Furthermore, we
  note that adding an approval to a voter does not impact the scores
  distributed by any other voters. Consequently, given $x$ and $y$ we
  proceed as follows: First, for each voter (who does not approve $y$),
  we compute how much would $y$'s score increase relative to the score
  of $x$ if this voter approved $y$. We sort the voters in the
  nonincreasing order with respect to these values and add approvals
  to the first $B$ of them (or all of them, if there are fewer than
  $B$ voters who do not approve $y$). If in the end $y$ has at least
  as high a score as $x$, then we accept. Otherwise, we reject.

  For the \ggrr{\textsc{SAV}-\ggswap{}} algorithm, we follow a similar
  approach. For each vote $v$, it suffices to consider a single swap
  operation:
  \begin{enumerate}
  \item If $v$ approves $x$ but not $y$, then this operation is to move an approval from
    $x$ to $y$.
  \item If $v$ approves $x$ and $y$, then this operation is to move an approval from $x$
    to some other caniddate (if the voter approves all the candidates, then no operation
    is possible).
  \item If $v$ approves neither $x$ nor $y$, then this operation is to
    move an approval from some candidate other than $x$ to $y$ (if the
    voter does not approve any of the candidates, then no operation is
    possible).
  \item If $v$ does not approves $x$ but approves $y$, then there is no operation that
    we may wish to perform on this vote.
  \end{enumerate}
  For each voter, we compute how much the score of $y$ would increase
  relative to that of $x$ if we performed the associated
  operation. Then we perform up to $B$ operations that give greatest
  gain to $y$ relative to $x$. If this leads to $y$ having higher or
  equal score as $y$ then we accept. Otherwise, we reject.

  \ggremove{} is more involved because it may be beneficial to remove
  more than a single approval from the voters. We will show, however,
  that a case-by-case analysis of voter groups gives a polynomial-time
  greedy algorithm.  Let $\Delta(x,y)=\score(x)-\score(y)$ ($\Delta$
  in short) and note that we aim to decrease $\Delta $ to zero or
  less.  First, observe that (i) we can ignore voters who do not
  approve neither $x$ nor $y$ because removing approvals from such
  voters does not impact $\Delta$.  Second, (ii) in the group of
  voters approving $y$ but not $x$, removing approvals (not for $y$)
  increases $\score(y)$ and thus decreases $\Delta$ and we should
  start with removing approvals from voters with lowest number of
  approvals (because the impact on $\Delta$ is highest then, as the
  score for a voter is inversely proportional to the count of voters).
  Third, (iii) in the group of voters approving $x$ but not $y$,
  approvals should be removed only from $x$ and the removals should
  start from voters with lowest number of approvals (as per argument
  given for the previous group).  Finally, (iv) in the group of voters
  approving both $x$ and $y$, we must start with removing approval for
  $x$ from voters with lowest count of approvals and once we remove an
  approval for $x$, the voter becomes part of group (ii).  Since the
  only dependencies between the groups of voters are that, after
  removing approval from $x$, group (iv) voters can become (ii) voters
  and group (iii) voters can become (i) voters, our algorithm can
  guess the number of removals to be applied to each of the groups (at
  most $B$ in total) and then apply them according to deterministic
  rules defined above, starting with group (iv) and (iii).  There are
  at most $O(B^2$) of such guesses (we do not remove from group (i))
  and our polynomial-time algorithm can simply iterate over all of the
  possibilities.  We accept if there exists an iteration where within
  $B$ removals we get $\Delta \leq 0$. Otherwise, we reject.

  What is left, is to address the situation where there are multiple
  winning committees. In this case there are at least two candidates
  that have the same score and whose score is lowest among those that
  belong to some winning committee(s). Try all pairs of such
  candidates. If it is possible to add an approval to one of them, or
  remove an approval from one of them, or move an approval from one of
  them (possibly to the other one), then with a single operation we
  change the result of the election (because the two candidates we
  consider end up with different scores). Now let us consider what
  happens if performing these operations is impossible:
  \begin{description}
  \item[Case 1:] For every pair of candidates we consider, it is
    impossible to add an approval to either of them. This means that
    each of the candidates that has lowest score in some winning
    committee is approved by all the voters. However, this implies
    that all the candidates are approved by all the voters and, hence,
    it is impossible to change the result by adding approvals.

  \item[Case 2:] For every pair of candidates we consider, it is
    impossible to remove approvals from either of them. This means
    that each of the candidates that has lowest score in some winning
    committee is not approved by any of the voters. Then to change the
    election result it suffices to find a candidate who is approved by
    fewest voters (but a nonzero number of them).  If this candidate
    is approved by at most $B$ voters then we can remove these
    approvals and change the election result. Otherwise, changing the
    result is impossible. Similarly, it is impossible to change the
    election result if neither of the candidates is approved by any of
    the voters.

  \item[Case 3:] For every pair of candidates we consider, it is impossible to
    move approvals. This means that each voter either approves all the
    candidates or none of them. Hence it is impossible to change the
    election result by moving approvals (because approval moves are
    impossible).    
  \end{description}
  This completes the proof.  
\end{proof}

In contrast to the case of the AV rule, for SAV the problems of
counting the number of ways in which $B$ approvals can be added or
removed without changing the election result are $\sharpp$-hard. We
show this by giving a Turing reduction from the classic
$\sharpp$-complete problem,
\textsc{\#Perfect-Matchings}~\cite{val:j:permanent}.\footnote{A
  (polynomial-time) Turing reduction of a counting problem $\#A$ to a
  counting problem $\#B$ is a polynomial-time procedure that given an
  instance of $\#A$ computes its result using $\#B$ as an oracle. This
  means that the procedure can form instances of problem $\#B$ and
  receives results for them in constant time (from the oracle). If
  $\#A$ Turing-reduces to $\#B$, this means that $\#A$ is no harder
  than $\#B$ because the ability to solve $\#B$ in polynomial-time
  yields the same ability for $\#A$.}  In this problem we are
given some bipartite graph $G$ with set $V(G)$ of the vertices ``on
the right'' and set $U(G)$ of the vertices ``on the left,'' where we
have that $|V(G)| = |U(G)|$. Each edge connects some vertex from
$V(G)$ with some vertex from $U(G)$. A perfect matching is a set of
edges such that every vertex belongs to exactly one edge in this
set. In \textsc{\#Perfect-Matchings} we ask for the number of perfect
matchings in our graph.

\begin{theorem}\label{thm:sav-add-counting}
  The problem of counting the number of ways in which approvals can be
  added without changing the set of SAV winning committees is
  $\sharpp$-complete.
\end{theorem}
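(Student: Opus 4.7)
The plan is to establish both membership in $\sharpp$ and $\sharpp$-hardness. Membership is routine: a polynomial-size certificate is a multiset of $B$ (voter, candidate) pairs to turn into new approvals; given such a certificate one computes the resulting SAV scores in polynomial time and checks whether the set of score-maximal size-$k$ committees agrees with that of $E$. Counting the certificates that preserve the winning set is therefore in $\sharpp$.

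For hardness, I would give a polynomial-time Turing reduction from \textsc{\#Perfect-Matchings} on balanced bipartite graphs. Given $G = (U \cup V, E)$ with $|U| = |V| = n$, the reduction builds an approval election together with a bound $B = n$ so that the counts returned by the oracle, on one or on polynomially many slightly varied instances, can be combined into the number of perfect matchings of $G$.

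The construction I have in mind encodes $U$ as a set of designated candidates $c_u$ (augmented by padding candidates) and $V$ as a set of designated voters $w_v$ (augmented by balancing voters). The initial approvals are set so that every candidate has the same SAV score, placing the election in the multiple-winning-committees regime with a chosen winning set. The crucial calibration is that the only single approval additions that keep this winning set untouched are those of the form ``voter $w_v$ starts approving candidate $c_u$'' with $(u,v) \in E$: for a non-edge, the induced shift in $w_v$'s distribution tips a padding candidate past a designated one (or vice versa). With this in place, a safe collection of $B = n$ additions must place exactly one new approval in each designated voter (otherwise some voter's redistribution is too large) and must hit pairwise distinct targets (otherwise the doubly-approved $c_u$ overtakes the remaining designated candidates). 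Hence such $B$-tuples correspond bijectively to perfect matchings of $G$, modulo additions that involve only padding and can be accounted for separately.

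The main obstacle is tuning the SAV arithmetic precisely enough that the three enforced constraints---one addition per designated voter, distinct designated targets, and only along edges of $G$---coincide exactly with the matching condition. The rational scores produced by SAV cooperate here (every addition to a voter currently approving $t$ candidates changes scores by expressible fractions of the form $\tfrac{1}{t} - \tfrac{1}{t+1}$), but simultaneously forbidding each spurious move without outlawing a genuine matching requires careful accounting with padding voters and dummy candidates. If a residue of non-matching safe additions survives the construction, the Turing reduction will invoke the oracle on polynomially many elections with slightly varying padding sizes and recover the perfect-matching count by solving a small linear system (or by straightforward inclusion--exclusion).
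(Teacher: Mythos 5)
Your membership argument is fine, and your high-level plan (Turing reduction from \textsc{\#Perfect-Matchings}, with the oracle answer of the form $f(G)\cdot M(G)$) matches the paper's. But the hardness construction you sketch has a genuine gap at exactly the point you yourself flag as ``the main obstacle,'' and the specific design choice you make runs against the grain of how SAV arithmetic behaves. You want the safe single additions to be approvals \emph{for} designated candidates along edges, i.e., moves that \emph{increase} a designated candidate's score by $\tfrac{1}{t+1}$ while shaving $\tfrac{1}{t}-\tfrac{1}{t+1}$ off every candidate that voter already approves. For the winning set to survive $n$ such additions, every designated candidate's net change must be identical; but the loss term a candidate $c_u$ suffers is proportional to the number of designated voters that already approve $c_u$ and add something, which (in any encoding where the initial approvals carry the edge/non-edge information) depends on $\deg(u)$. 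So the ``all scores move equally'' condition becomes degree-dependent rather than matching-equivalent, and neither your one-addition-per-voter claim nor your distinct-targets claim is established. Falling back on ``polynomially many padded instances and a linear system'' does not repair this, because you have not exhibited any concrete family of instances whose counts determine $M(G)$.

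The paper's proof sidesteps this entirely by inverting the direction of the score change. It sets $k=1$, gives every vertex candidate of $G$ SAV score exactly $2$ (via filler voters), and creates one \emph{edge voter} per edge $\{u,v\}$ approving $u$, $v$, and $n-2$ private dummies. It then proves that any addition of an approval \emph{for} a vertex candidate strictly raises that candidate above $2$ in a way the remaining budget cannot undo (the inequality $\tfrac{1}{n+i}-(n-i)\cdot\tfrac{1}{n(n+1)}>0$), and that any addition in a filler vote breaks the tie as well; hence the only safe moves are approvals for dummies inside edge votes, each of which \emph{lowers} both endpoints of that edge by exactly $\tfrac{1}{n(n+1)}$. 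Equal decrease of all $2n$ vertex candidates with $n$ moves then forces a perfect matching, and the count is $(|D|-(n-2))^n\cdot M(G)$ from a single oracle call. If you want to salvage your write-up, the cleanest fix is to adopt this ``uniform decrease via dummies in edge votes'' mechanism rather than trying to calibrate uniform increases.
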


\begin{proof}
  To see that the problem belongs to $\sharpp$, it suffices to note
  that a nondeterministic Turing machine can guess which approvals to
  add and then accept if this leads to changing the election result
  (which can be verified in deterministic polynomial time).
  
  To show $\sharpp$-hardness, we give a reduction from
  \textsc{\#Perfect-Matchings}, where we are given a bipartite graph
  $G$ and we ask how many perfect matchings it has. We write $V(G)$
  and $U(G)$ to denote the two sets of vertices in the graph, and we
  write $E(G)$ to denote its set of edges (so each edge from $E(G)$
  connects vertex from $V(G)$ with some vertex from $U(G)$).

  Let $n = |V(G)| = |U(G)|$ be the number of vertices in each part of
  the input graph (if $V(G)$ and $U(G)$ were of different
  cardinality, then there would be no perfect matching in our
  graph). We form an election $E = (C,V)$ as follows. We let the
  candidate set be $C = U(G) \cup V(G) \cup D$, where $D$ is a set of
  dummy candidates. Whenever we form a new voter who approves some
  dummy candidates, these dummy candidates are unique for this voter
  and are not approved by the other ones (thus, whenever we speak of a
  voter approving some dummy candidates, we implicitly create new
  dummy candidates). We create two groups of voters:
  \begin{enumerate}
  \item For each edge $e = \{u,v\} \in E(G)$, we form a voter $v_e$
    who approves vertex candidates $u$ and $v$, and $n-2$ dummy
    candidates (thus, this voter contributes $\nicefrac{1}{n}$ points to
    the SAV scores of $u$ and $v$). We refer to the voters in this
    group as to the \emph{edge voters}.

  \item For each vertex candidate $x \in U(G) \cup V(G)$, we form
    sufficiently many voters who each approve $x$ and $n-1$ dummy
    candidates, so that the SAV score of $x$ is exactly $2$ (we add at
    most $2n$ voters for each vertex candidate; note that the edge
    voters contribute at most $1$ point to the SAV score of each
    vertex candidate). We refer to the voters in this group as to the
    \emph{filler voters}.
  \end{enumerate}

  We set the size of the committee to be $k = 1$. In our election,
  each vertex candidate has SAV score $2$ and each dummy candidate has
  SAV score $\nicefrac{1}{n}$. Thus the winning committees are exactly
  the singleton subsets of $V(G) \cup U(G)$.

  We will show that the number of ways in which it is possible to add
  $n$ approvals to our election without changing the set of winning
  committees is of the form $f(G)\cdot M(G)$, where $M(G)$ is the
  number of perfect matchings of $G$ and $f(G)$ is an easily
  computable function. This suffices to show $\sharpp$-hardness of our
  problem. The intuition is that we can add $n$ approvals without
  changing the set of winning committees if and only if these
  approvals are added to the edge voters that correspond to a perfect
  matching (this way we decrease the scores of all the vertex
  candidates equally).  Let us now consider what happens after we add
  some $n$ approvals:

  \begin{description}
  \item[Some vertex candidate obtains an additional approval.] Let us
    assume that some vertex candidate $x$ obtains an additional
    approval in some vote $v$ (either a filler vote or an edge
    vote). Further, let $i-1$ be the number of other candidates that
    also obtain an additional approval in this vote.  Thus, $x$'s
    score increases by $\frac{1}{n+i}$ due to the additional
    approval. By adding the remaining $n-i$ approvals, we can decrease
    $x$'s score by at most $\frac{n-i}{n(n+1)}$. This is so, because
    adding a single approval to a vote where $x$ is already approved
    decreases $x$'s score by at most
    $\frac{1}{n} - \frac{1}{n+1} = \frac{1}{n(n+1)}$ (this happens
    when we add a single approval to an edge or filler vote where $x$
    is approved; adding two approvals to such a vote decreases the
    score by $\frac{1}{n}-\frac{1}{n+2} = \frac{1}{n(n+2)}$, which is
    less than $2\cdot \frac{1}{n(n+1)}$; in general, adding some $j$
    approvals decreases $x$'s score by a lesser value than adding
    single approvals to $j$ votes where $x$ has originally been
    approved). However, we have that:
    \[
\textstyle      \frac{1}{n+i} - (n-i)\cdot \frac{1}{n(n+1)} =
      \frac{n+i^2}{n(n+1)(n+i)} > 0.
    \]
    This means that after adding approvals, $x$'s score is greater
    than $2$. Yet, by adding $n$ approvals it is impossible to
    increase the scores of more than $n$ of the $2n$ vertex
    candidates. This means that if some vertex candidate obtains an
    additional approval, then the set of winning committees changes.

  \item[Some dummy vertex obtains an additional approval in a filler
    vote.] If some dummy candidate obtains an additional approval in a
    filler vote, then the score of the single vertex candidate who is
    approved in this vote decreases.\footnote{From the previous case
      we know that it is not possible to increase the score of this vertex
      candidate to the original value by adding some up to $n-1$ approvals for him
      or her.} The remaining $n-1$ additional approvals can lead to
    decreasing the scores of at most $2(n-1)$ vertex candidates (e.g.,
    if we add approvals in the edge votes only) so there is at least
    one vertex candidate whose score does not decrease. In
    consequence, the set of winning committees changes.

  \item[Approvals are added only for dummy candidates in the edge
    votes.] Adding an approval for a dummy candidate in an edge vote
    $v_e$ decreases the scores of the vertex candidates incident to
    $e$ by exactly $\frac{1}{n(n+1)}$.  Thus the only possibility that
    we add $n$ approvals and the scores of all the vertex candidates
    remain the same (but decreased) is that we add approvals to dummy
    candidates in $n$ edge votes that form a perfect matching (simple
    counting arguments %
    show that if we add two approvals to some edge vote or if the edge
    votes do not form a perfect matching, then some vertex candidate's
    score does not decrease).
  \end{description}

  Let $M$ be the number of perfect matchings in our graph. By the
  above reasoning, for each perfect matching we have exactly
  $(|D|-(n-2))^n$ ways to add $n$ approvals in our election so that
  the set of winning committees does not change (for each edge from
  the matching, in the corresponding vote we can add an approval for
  one of the $|D|-(n-2)$ dummy candidates that are not approved
  there). Thus, given a solution for our problem, it is easy to obtain
  the number of perfect matchings in the input graph.
\end{proof}

\begin{theorem}\label{thm:sav-remove-counting}
The problem of counting the number of ways in which
approvals can be removed without changing the set of SAV winning
committees is $\sharpp$-complete.
\end{theorem}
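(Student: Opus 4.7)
The plan is to adapt the reduction from \Cref{thm:sav-add-counting}. Membership in $\sharpp$ follows as before: a nondeterministic machine guesses which $B$ approvals to remove and a deterministic verifier checks whether the SAV winning set changes. For $\sharpp$-hardness, I would give a polynomial-time Turing reduction from \textsc{\#Perfect-Matchings} using essentially the same election as in \Cref{thm:sav-add-counting}, but calibrated so that the ``canonical'' score-preserving operation is now a removal of a dummy approval in an edge voter rather than an addition.

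Concretely, given a bipartite graph $G$ with $|V(G)| = |U(G)| = n$, I would form an election $E = (C, V)$ with $C = V(G) \cup U(G) \cup D$ (dummies unique to individual voters) and committee size $k = 1$. For each edge $e = \{u, v\} \in E(G)$ I create an edge voter $v_e$ approving $u$, $v$, and $n-1$ fresh dummies, so $v_e$ contributes $\nicefrac{1}{n+1}$ per approval. Then I add filler voters, each approving one vertex candidate together with some fresh dummies, calibrated so that every vertex candidate has SAV score exactly $2$. The winning committees are then precisely the singletons $\{x\}$ for $x \in V(G) \cup U(G)$, and I would set $B = n$.

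The central claim is that the set of winning committees is preserved after removing $n$ approvals if and only if those removals consist of one dummy approval taken from each of $n$ edge voters whose edges form a perfect matching of $G$. In this ``good'' case, the per-approval contribution of each touched edge voter rises from $\nicefrac{1}{n+1}$ to $\nicefrac{1}{n}$, so both endpoints of each selected edge gain exactly $\nicefrac{1}{n(n+1)}$; since a perfect matching covers every vertex once, all $2n$ vertex-candidate scores increase by the same amount and the tie is preserved. A perfect matching yields $(n-1)^n$ valid removal patterns (the choice of which dummy to remove from each matching-edge voter), so the oracle value equals $M(G) \cdot (n-1)^n$, from which $M(G)$ is recovered by a single division.

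The main obstacle, as in \Cref{thm:sav-add-counting}, is the case analysis ruling out every other removal pattern. The subcases to handle are: (i) removing an approval for a vertex candidate $x$ strictly decreases $x$'s score by at least $\nicefrac{1}{a}$ for some $a$ tied to the voter size, and an inequality in the spirit of $\nicefrac{1}{a} > (n-1) \cdot \nicefrac{1}{n(n-1)}$ should show that the remaining $n-1$ removals cannot pull each of the other $2n-1$ vertex candidates' scores down by an equal amount, so the tie must break; (ii) removing a dummy approval from a filler voter boosts exactly one vertex candidate by an amount that, by the choice of filler-voter size, differs from $\nicefrac{1}{n(n+1)}$, so mixing such a removal with edge-voter removals produces uneven gains; and (iii) removing two or more approvals from a single edge voter, or distributing the removals across edge voters whose edges do not form a perfect matching, again leaves some vertex candidate gaining strictly more or strictly less than some other. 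Each subcase reduces to a direct arithmetic comparison in the spirit of the inequality $\nicefrac{1}{n+i} - (n-i) \cdot \nicefrac{1}{n(n+1)} > 0$ used in \Cref{thm:sav-add-counting}, and getting the bookkeeping right in the mixed version of subcase (iii)---where a removal of a vertex-candidate approval is partially compensated by dummy removals elsewhere---will be the most delicate step.
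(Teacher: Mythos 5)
Your overall architecture is the same as the paper's: reuse the election from \Cref{thm:sav-add-counting} with $k=1$ and $B=n$, make the ``canonical'' result-preserving operation the removal of one dummy approval from each of $n$ edge voters forming a perfect matching, and recover $M(G)$ from an oracle value of the form $M(G)\cdot(\text{\#dummy choices})^n$. However, there is a genuine gap in your construction, and it sits exactly where you flag ``the most delicate step.'' With edge voters approving only $n+1$ candidates, removing a vertex candidate $x$'s approval costs $x$ only $\frac{1}{n+1}$, while the remaining $n-1$ removals, if concentrated as dummy removals inside a single other small vote that approves $x$, can raise $x$'s score by as much as $\frac{1}{2}-\frac{1}{n+1}$, which for $n\geq 3$ exceeds the loss. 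So the domination inequality you appeal to (the analogue of $\frac{1}{n+i}-(n-i)\cdot\frac{1}{n(n+1)}>0$ from the addition case) simply fails in your construction, and you cannot conclude that a removed vertex approval forces that candidate's score strictly below $2$. Without that, you have not excluded ``compensated'' removal patterns outside the perfect-matching family, and the oracle value need not equal $M(G)\cdot(n-1)^n$. Your subcase (i) inequality as written ($\frac{1}{a}>(n-1)\cdot\frac{1}{n(n-1)}$, i.e., $a<n$) also points in the wrong direction for this purpose.

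The paper closes exactly this hole by padding: every voter (edge and filler alike) approves exactly $n^2$ candidates, so each vertex candidate receives value $\frac{1}{n^2}$ per approval, and the maximum gain obtainable from $n-1$ further removals in a vote approving $x$ is $\frac{1}{n^2-n+1}-\frac{1}{n^2}=\frac{n-1}{n^2(n^2-n+1)}<\frac{1}{n^2}$, strictly less than the loss from deleting one of $x$'s approvals. This makes your subcase (i) a one-line computation, after which the counting argument (at most $n$ of the $2n$ vertex candidates can have a score that decreases; a filler-vote removal boosts only one vertex candidate, so some vertex candidate is left untouched at score exactly $2$) finishes subcases (ii) and (iii) without any delicate mixed-case bookkeeping. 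If you uniformize all vote sizes to $n^2$ in your construction, your argument goes through and yields the count $M(G)\cdot(n^2-2)^n$; as currently specified, the reduction is not established.
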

The proof is quite similar to the one for \Cref{thm:sav-add-counting} and we present it in \Cref{app:sav-remove-counting}.

\subsection{Unit-Decreasing Thiele Rules: Hardness Results}

For the case of unit-decreasing Thiele rules, all our decision
problems are $\np$-hard (and, so, there is no point in considering
their counting variants). To show this fact, we give reductions from
the \textsc{X3C} problem. Naturally, our results apply to CC and
PAV.

\begin{definition}\label{def:x3c}
  An instance of \textsc{X3C} consists of a universe set $U = \{u_1,$
  $\ldots, u_{3k}\}$ and a family $\mathcal{S} = \{S_1, \ldots, S_m\}$ of
  tree-element subsets of $U$. We ask if there is a collection
  $\mathcal{S}^{cov} \subseteq \mathcal{S}$ of $k$ sets from $\mathcal{S}$
  whose union is $U$ (i.e., we ask if there is an exact cover of $U$).
\end{definition}

\begin{theorem}\label{propNPSwapUD}
  Let \ggR{} be a unit-decreasing Thiele rule and let $\ggop$ be an
  operation in $\{\ggadd, \ggremove, \ggswap\}$. \ggrr{\ggR{}-\ggop{}}
  is $\np$-hard.
\end{theorem}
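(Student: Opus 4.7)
The plan is to reduce from \textsc{X3C} (\Cref{def:x3c}). Given an instance $(U, \mathcal{S})$ with $|U| = 3k$ and $\mathcal{S} = \{S_1, \ldots, S_m\}$, I would construct an election $E$, a committee size $K$, and a budget $B$ so that $B$ operations of type $\ggop$ can change $\ggR(E, K)$ if and only if $(U, \mathcal{S})$ admits an exact cover. I would present the $\ggadd$ case first and then sketch how the other two operations are handled.

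The election would consist of two layers, an \emph{X3C core} and a \emph{blocker gadget}. The core has set-candidates $c_1, \ldots, c_m$ (one per $S_i$) and $3k$ element voters $v_1, \ldots, v_{3k}$, where $v_j$ approves exactly those $c_i$ with $u_j \in S_i$; this is the standard encoding used in reductions from \textsc{X3C} to winner determination under $\omega$-$\ggav$ rules. The blocker gadget adds $K$ dummy candidates $d_1, \ldots, d_K$, each supported by a large group of personal voters who approve only that one dummy. The sizes of these groups are chosen (as a function of $\omega$) so that in $E$ the unique winning committee is the pure dummy committee $W_0 = \{d_1, \ldots, d_K\}$, and so that the $\omega$-score gap between $W_0$ and any other size-$K$ committee is exactly $B$ precisely when that other committee consists of $K=k$ set-candidates corresponding to an exact cover, and is strictly larger in every other case. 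The value of $B$ is then set to that exact-cover gap.

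For the forward direction, given an exact cover $\{S_{i_1}, \ldots, S_{i_k}\}$, I would pick $B$ gadget voters that currently approve no candidate in $\{c_{i_1}, \ldots, c_{i_k}\}$ and add approvals for the cover candidates; since each such voter moves from approving zero to one committee member, every addition raises the cover committee's score by exactly $\omega_1 = 1$, closing the gap precisely and producing an election $E'$ with $\ggR(E, K) \neq \ggR(E', K)$. For the reverse direction I would use the unit-decreasing property $1 = \omega_1 > \omega_2 \geq \cdots$: in any non-exact-cover size-$K$ committee built only of set-candidates, either some element voter is uncovered (losing $1$ from the maximal possible score) or two selected sets overlap at some $u_j$, so $v_j$ contributes at most $1 + \omega_2 < 2$ instead of $2$ (losing at least $1 - \omega_2 > 0$). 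In either case the gap to $W_0$ strictly exceeds $B$ and cannot be closed within budget. The mixed-committee case, where a competitor combines some dummies with some set-candidates, is ruled out by the same margin calculation once the per-dummy support is chosen large enough relative to $\omega_K$.

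The main obstacle is the joint calibration across \emph{all} unit-decreasing Thiele rules: the gadget size depends on $\omega_K$, which can be as small as $1/K$ (as for \ggpav{}), so the number of personal dummy voters per dummy grows like $\Theta(K/\omega_K)$. Because $\omega$ is polynomial-time computable, this quantity still stays polynomial in the input, so the reduction remains polynomial. Once the $\ggadd$ case is settled, the $\ggremove$ case follows by a mirror construction in which $E$ starts with a near-tie that $B$ well-chosen removals separate only along an exact-cover pattern, and the $\ggswap$ case is obtained by packaging an intended addition and an intended removal into a single gadget voter so that one approval swap realizes both. Note that I do not aim for $\np$-membership: as already observed in the introduction, verifying that the winner set changed would require solving winner determination for $\ggR$, which is itself $\np$-hard.
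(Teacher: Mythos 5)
There is a genuine gap in your \textsc{Add} construction, and it is fatal to the reduction as designed: mixed committees that replace only \emph{one} dummy by a set-candidate defeat your calibration. Write $N_j$ for the number of personal voters of dummy $d_j$. For $W_0=\{d_1,\dots,d_K\}$ to be the unique winner you need $N_j>3$ for every $j$ (each set-candidate already has $3$ approvals from element voters). Your budget is the gap to an exact-cover committee, $B=\sum_j N_j-3k$, but the committee $(W_0\setminus\{d_j\})\cup\{c_i\}$ trails $W_0$ by only $N_j-3$, and $N_j-3<\sum_{j'}N_{j'}-3k$ whenever $k\ge 2$ and all $N_{j'}\ge 4$. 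So one can always add $N_j-3\le B$ approvals for $c_i$ among $d_j$'s personal voters --- each worth exactly $\omega_1=1$ to the mixed committee and nothing to $W_0$ --- and change the winner set with no exact cover in sight. Enlarging the dummy support makes this worse, not better: $B$ scales like $k$ times the single-replacement gap, so your requirement that the gap be ``strictly larger than $B$ in every other case'' is unachievable for any choice of group sizes. The root cause is structural: a committee differing from $W_0$ in one slot is always cheaper to promote than one differing in all $k$ slots, so a soundness argument cannot survive a budget equal to the full $k$-slot gap.

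The paper sidesteps this by using a budget of a \emph{single} operation and keeping the intended competitor one or two points behind from the start. Its default committee is a set $B=\{b_1,\dots,b_k\}$ of fresh candidates; a mutual-exclusion gadget adds, for every pair $(a_i,b_j)$, a block of $\ell=\lceil 3/(1-\omega_2)\rceil$ voters approving both, so any committee mixing the two candidate groups loses at least $\ell(1-\omega_2)\ge 3$ points by the unit-decreasing property; balancing voters equalize raw approval counts so that exact-cover committees score exactly two points below $B$ (the surplus coming from two control voters approving $b_1$), and a single swap (or a single add/remove on the control approvals) closes the gap precisely for exact covers. If you want to keep a large-budget design, you would need an analogue of this mutual-exclusion penalty pricing every \emph{partial} replacement of $W_0$ out of budget, which your personal-voter blocker does not provide. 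Separately, your \ggremove{} and \ggswap{} cases are only gestured at; in the paper they are one-line modifications of the same single-operation construction, so fixing the \ggadd{} case along these lines would give you the other two essentially for free.
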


\begin{proof}
  We start with the case of the $\ggswap$ operation and give a
  reduction from the classic $\np$-complete problem
  \textsc{Exact-Cover-By-3-Sets} (X3C). Our input X3C instance is as
  described above, in \Cref{def:x3c}. Namely, we have a universe set
  $U = \{u_1, \ldots, u_{3k}\}$, a family
  $\mathcal{S} = \{S_1, \ldots, S_m\}$ of size-$3$ subsets of $U$, and
  we ask for the existence of an exact cover of $U$ with $k$ sets from
  $\mathcal{S}$.
  
  Let us consider committees of size $k$ and let
  $\omega = (1, \alpha, \omega_3, \ldots, \omega_k)$ be the vector
  that specifies the unit-decreasing Thiele rule $\ggR$ for this
  setting. By definition, we have $1 > \alpha$.  Let $\ell$ be the
  constant $\lceil \frac{3}{1-\alpha} \rceil$.
  We form election $E=({C}, {V})$ as follows.
  Let $\gC = \{\gc_1,...,\gc_m\}$ and $\gCC = \{\gcc_1,...,\gcc_k\}$
  be two sets of candidates, where the candidates in $\gC$ correspond
  to the sets from $\mathcal{S}$, and the candidates from $\gCC$ form
  a default winning committee.  We let the candidate set of our
  election be $C = A \cup B$ and we have the following voter
  groups (see Figure~\ref{fig-unit}):
  \begin{enumerate}
  \item We introduce voters $v_1, \ldots, v_{3k}$ who correspond to
    the elements of the universe set~$\mathcal{E}$ (we refer to them
    as the element voters). Each voter $v_i$ approves candidate
    $\gcc_{\lceil i/3 \rceil}$ and exactly those candidates $\gc_j$
    for whom it holds that element $u_i$ belongs to the set $S_j$.

  \item We introduce voter set
    $\gVV = \{\gvv_{i,j,l} \mid i \in [m], j \in [k], l \in [\ell]\}$,
    whose role is to form a mutual exclusion gadget regarding the
    winning committees (see details below). For each
    $i \in [m], j \in [k], l \in [\ell]$, $\gvv_{i,j,l}$ approves
    $\gc_i$ and $\gcc_j$. 

  \item We introduce voter set
    $\gVV^a = \{\gvv^a_{i,j,l} \mid i \in [m], j \in [m-k], l \in
    [\ell]\}$, whose role is to balance the scores of candidates in
    \gC{} and \gCC{}. For each $i \in [m], j \in [m-k], l \in [\ell]$,
    $\gvv^a_{i,j,l}$ approves $\gc_i$.

  \item We introduce voter set $\gVVV = \{\gvvv_1, \gvvv_2\}$,  whose members control
    which committee wins. Both voters in \gVVV{} approve $\gcc_1$.

  \end{enumerate}

  We show that $\gCC$ is the unique winning committee in election $E$.
  To this end, let us consider some arbitrary size-$k$ committee $Q$.
  Members of $Q$ can receive at most $3k$ approvals from the element
  voters, $km\ell$ approvals from the voters in the sets
  $\gVV \cup \gVV^a$, and $2$ approvals from the voters in \gVVV{}.
  Thus, $\score_E^{\omega\hbox{-}\av}(Q) \leq 3k+km\ell+2$ (and for
  the equality to hold, it cannot be the case that some voter approves
  more than one committee member).  Since one can verify that
  $\score_E^{\omega\hbox{-}\av}(\gCC)=3k+km\ell+2$, \gCC{} is a
  winning committee.  Furthermore, $B$ is a unique winning
  committee. To see this, let $Y$ be some winning committee different
  from $B$. Note that $Y$ must include candidate $b_1$ as this is the
  only way to get points from the voters in \gVVV{}. For the sake of
  contradiction, suppose that $Y$ also contains some candidate
  $\gc_i \in \gC{}$.  However, this means that there are $\ell$ voters
  in $\gVV$ who approve both $\gc_i$ and $\gcc_1$. By definition of
  the unit-decreasing Thiele rules, the score that committee $Y$
  receives from the $2\ell$ approvals that these voters grant to
  $\gc_i$ ad $\gcc_1$ is at most $\ell + \alpha\ell < 2\ell$ and, so,
  the total score of committee $Y$ is lower than $3k+mk\ell+2$.  Thus,
  we have that $\ggR{}(E, k) = \{\gCC{}\}$.

\begin{figure}
\footnotesize 
\centering
\arraycolsep=1.0pt\def\arraystretch{1.0}
$\begin{array}{c|cccc|cc}
      & \gc_1 & \gc_2 & \gc_3 & \gc_4 & \gcc_1 & \gcc_2 \\
    \hline
\gv_{1}             &\ggx&    &    &    &\ggx&    \\
\gv_{2}             &\ggx&    &    &\ggx&\ggx&    \\
\gv_{3}             &\ggx&\ggx&    &\ggx&\ggx&    \\
\gv_{4}             &    &\ggx&\ggx&\ggx&    &\ggx\\
\gv_{5}             &    &\ggx&\ggx&    &    &\ggx\\
\gv_{6}             &    &    &\ggx&    &    &\ggx\\
\hline
\gvv_{1,1,[\ell]}   &\ggX&    &    &    &\ggX&    \\
\gvv_{1,2,[\ell]}   &\ggX&    &    &    &    &\ggX\\
\gvv_{2,1,[\ell]}   &    &\ggX&    &    &\ggX&    \\
\gvv_{2,2,[\ell]}   &    &\ggX&    &    &    &\ggX\\
\gvv_{3,1,[\ell]}   &    &    &\ggX&    &\ggX&    \\
\gvv_{3,2,[\ell]}   &    &    &\ggX&    &    &\ggX\\
\gvv_{4,1,[\ell]}   &    &    &    &\ggX&\ggX&    \\
\gvv_{4,2,[\ell]}   &    &    &    &\ggX&    &\ggX\\
\hline
\gvv^a_{1,1,[\ell]} &\ggX&    &    &    &    &    \\
\gvv^a_{1,2,[\ell]} &\ggX&    &    &    &    &    \\
\gvv^a_{2,1,[\ell]} &    &\ggX&    &    &    &    \\
\gvv^a_{2,2,[\ell]} &    &\ggX&    &    &    &    \\
\gvv^a_{3,1,[\ell]} &    &    &\ggX&    &    &    \\
\gvv^a_{3,2,[\ell]} &    &    &\ggX&    &    &    \\
\gvv^a_{4,1,[\ell]} &    &    &    &\ggX&    &    \\
\gvv^a_{4,2,[\ell]} &    &    &    &\ggX&    &    \\
\hline
\gvvv_1             &    &    &    &    &\ggx&    \\
\gvvv_2             &    &    &    &    &\ggx&    \\
\end{array}$
\caption{Example of an election used in the proof of
  Theorem~\ref{propNPSwapUD}, for X3C instance with universe set
  $U = \{u_1, \ldots, u_6\}$ and sets $S_1 = \{u_1,u_2,u_3\}$,
  $S_2 = \{u_3,u_4,u_5\}$, $S_3 = \{u_4,u_5,u_6\}$, and
  $S_4 = \{u_2,u_3,u_4\}$. Consequently, we have $m=4$ and $k = 2$.
  Symbol $\ggx$ represents an approval for a candidate from a given
  voter, and $\ggX$ represents $\ell$ approvals coming from a group of
  voters.  }

  \label{fig-unit}
\end{figure}

  We now show that it is possible to change the set of winning
  committees with a single approval swap if and only if the answer for
  our input X3C instance~is~\emph{yes}.

  For the first direction, suppose that an exact cover
  $\mathcal{S}^{cov}$ exists for our input X3C instance and let $X$ be
  a size-$k$ committee corresponding to $\mathcal{S}^{cov}$ (i.e., $X$
  contains members of $\gC$ that correspond to the sets in
  $\mathcal{S}^{cov}$).  We note that
  $\score_E^{\omega\hbox{-}\av}(X)=3k+km\ell$.  To see this, note that
  as $X$ corresponds to a set cover, it receives $3k$ points from the
  element voters, each of the $k$ candidates in $X$ is approved by
  $k\ell$ voters from \gVV{}, and is approved by $(m-k)\ell$ voters
  from $\gVV^a$; each of these approvals translates to a single point
  because the voters in these groups do not approve more than one
  member of $X$ each. Finally, there are no approvals for members of
  $X$ from the voters in \gVVV{}.

  We form election $E'$ by swapping $\gvvv_2$'s approval from $\gcc_1$
  to some member of $X$, so we have
  $\score_{E'}^{\omega\hbox{-}\av}(X) = 3k+km\ell+1$. One can also verify
  that $\score_{E'}^{\omega\hbox{-}\av}(\gCC) = 3k+km\ell+1$. As a
  consequence (and following similar reasoning as a few paragraphs
  above) we see that both $X$ and $\gCC$ are winning committees.  In
  other words, a single approval swap sufficed to change the set of
  winning committees.

  For the other direction, let us assume that it is possible to change
  the set of winning committees in election $E$ by a single approval
  swap. We will show that this implies that there is an exact set
  cover for our input instance of X3C.  We first note that after a
  single swap, the score of committee $B$ can drop to no less than
  $3k + km\ell + 1$. Now, consider some committee $Y$ that contains
  candidates from both $\gC$ and $\gCC$. By a reasoning analogous to
  that showing that $\gCC$ is the unique winning committee in election
  $E$, members of $Y$ can, altogether, receive at most $3k+km\ell + 3$
  approvals,\footnote{We have the $+3$ component instead of the $+2$
    component due to the approval swap.} but there are at least $\ell$
  voters that approve at least two members of $Y$, each. Thus, the
  final score of committee $Y$ is at most
  $3k+mk\ell+3-\ell+\ell\alpha =3k+mk\ell+3-\ell(1-\alpha) \leq
  3k+mk\ell+3-\frac{3(1-\alpha)}{1-\alpha} =3k+mk\ell$. As a
  consequence, $Y$ certainly is not a winning committee.

  So, if after the approval swap there is a winning committee $X'$
  other than $\gCC$, it must consist of $k$ members of $\gC$. Further,
  both its score and the score of $\gCC$ must be exactly
  $3k+mk\ell+1$. Indeed, this is the lowest score that $\gCC$ may
  have, and the highest score that $X'$ may have after the
  swap. Further, $X'$ would have such score only if it received
  exactly $3k+km\ell$ points prior to the swap.  However, $X'$ can
  have score $3k+km\ell$ prior to the swap only if it corresponds to
  an exact cover for our X3C instance (otherwise the element voters
  would assign fewer than $3k$ points to $X'$). This concludes the
  proof.
  
  The proofs of \ggremove{} and \ggadd{} cases follow a similar
  approach.  The difference is that we do not need the voter
  $\gvvv{}_2$ and we either \ggremove{} voter $\gvvv_1$'s approval
  from $\gcc{}_1$ or \ggadd{} $\gvvv_1$'s approval to some
  $c \in \gC{}$.
\end{proof}

One could seek various workarounds for the above hardness
result. Indeed, in one of the conference papers on which this one is
based, we reported existence of $\fpt$ algorithms parameterized by
either the number of candidates or the number of
voters~\cite{gaw-fal:c:approval-robustness}. We decided to omit these
resutls here to streamline the presentation (for readers who wish to
recreate them, their proofs were based on ILP formulations and
techniques used in the work of \citet{fal-sko-tal:c:bribery-success}).

\subsection{GreedyCC, GreedyPAV, and Phragm{\'e}n: NP-Completeness}

In this section we show that the \textsc{Robustness-Radius} problem is
$\np$-complete for GreedyCC and GreedyPAV, for adding, removing, and
swapping approvals. We observe that for each of our rules and
operation type, the respective \textsc{Robustness-Radius} problem is
clearly in $\np$. Indeed, it suffices to nondeterministically guess
which approvals to add/remove/swap, compute the winning committees
before and after the change (since our rules are resolute, in each
case there is exactly one), and verify that they are different. Hence,
in our proofs we will focus on showing $\np$-hardness.  We give
reductions from a restricted variant of \textsc{X3C}, which we call
\textsc{RX3C} (the difference, as compared to \textsc{X3C} in
\Cref{def:x3c}, is that every universe element belongs to exactly
three sets; it is well known that this variant of the problem remains
$\np$-complete~\cite{gon:j:x3c}).

\begin{definition}\label{def:rx3c}
  An instance of \textsc{RX3C} consists of a universe set $U = \{u_1,$
  $\ldots, u_{3n}\}$ and a family $\mathcal{S} = \{S_1, \ldots, S_{3n}\}$ of
  three-element subsets of $U$, such that each member of $U$ belongs
  to exactly three sets from~$\mathcal{S}$. We ask if there is a collection
  of $n$ sets from $\mathcal{S}$ whose union is $U$ (i.e., we ask if there
  is an exact cover of $U$).
\end{definition}

All our reductions follow the same general scheme: Given an instance
of \textsc{RX3C} we form an election where the sets are the candidates
and the voters encode their content. Additionally, we also have
candidates $p$ and $d$. Irrespective which operations we perform
(within a given budget), all the set candidates are always selected,
but by performing appropriate actions we control the order in which
this happens. If the order corresponds to finding an exact cover, then
additionally candidate~$p$ is selected. Otherwise, our rules select
$d$.  We first focus on adding approvals and then argue why our proofs
adapt to the other cases.  We mention that the proofs of the next two
theorems were adapted by \citet{jan-fal:c:ties-approval-voting} to the
general case of greedy unit-decreasing Thiele rules, albeit for the
problem of detecting ties.\footnote{In our paper, we assume resolute
  variants of the sequential rules, so ties are
  impossible. \citet{jan-fal:c:ties-approval-voting} assume
  parallel-universes tie-breaking and ties can happen. On the
  technical level, their proof is very similar to ours and could
  easily be adapted to showing $\np$-completeness of the
  \textsc{Robustness-Radius} problems for greedy unit-decreasing
  Thiele rules, but---formally---they show something a bit different.}

\begin{theorem}\label{thm:greedy-cc}
  GreedyCC-\textsc{Add-Robustness-Radius} is $\np$-complete.
\end{theorem}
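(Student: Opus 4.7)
Membership in $\np$ is immediate: a nondeterministic algorithm can guess which $B$ approvals to add, simulate GreedyCC on the perturbed election (which runs in polynomial time and produces a single winning committee), and verify that this committee differs from that of the input election. The plan for the hardness side is to reduce from \textsc{RX3C}, whose restricted structure---each element in exactly three sets, each set of size exactly three---makes the gadget analysis tractable.

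Given an instance $(U, \mathcal{S})$ with $U = \{u_1, \ldots, u_{3n}\}$ and $\mathcal{S} = \{S_1, \ldots, S_{3n}\}$, I would build an election whose candidate set consists of a ``set candidate'' $s_j$ for each $S_j$, two distinguished ``flag'' candidates $p$ (target) and $d$ (default), and possibly some auxiliary padding candidates used only to tune scores. The committee size $k$ would be a small additive constant above $n$ (so that $n$ set candidates plus one of $p,d$ exhaust the committee), and the budget would be $B = n$. The voters come in two layers. First, element voters: for each $u_i$, a calibrated number of copies of a voter approving exactly the three set candidates whose sets contain $u_i$ (so every $s_j$ starts with the same approval score). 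Second, filler voters approving only $p$ or only $d$, sized so that in the base election the tie-breaking order drives GreedyCC to pick $d$ alongside some specific, non-cover-forming subset of set candidates, whereas suitable $n$-sized perturbations instead force it to pick $p$ alongside an exact cover.

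For the forward direction, given an exact cover $\mathcal{S}^{cov}\subseteq \mathcal{S}$, I would designate $n$ specific approval additions---roughly, one per set in $\mathcal{S}^{cov}$, each nudging $p$ inside a particular element voter---that together (i) shift GreedyCC's first $n$ set picks to the sets of $\mathcal{S}^{cov}$ and (ii) raise $p$'s score above $d$'s at the critical round. The resulting committee differs from the base one. For the backward direction, I would trace GreedyCC round by round on the perturbed election and argue that the gadget is balanced so that no single added approval shifts more than one round of greedy; consequently, any successful $B$-sized perturbation must simultaneously cover all $3n$ element voters using $n$ pairwise-disjoint sets (i.e.\ encode an exact cover) and give $p$ the final edge over $d$.

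The principal technical obstacle is the numerical calibration that makes the gadget rigid. The filler group sizes, the number of copies of each element voter, and the tie-breaking order must be chosen together so that (a) in the base election the full greedy trajectory is pinned down by tie-breaking and does not accidentally realize an exact cover; (b) no ``cheap shortcut''---dumping all $n$ added approvals on $p$ alone, or boosting one set candidate into a drastically earlier pick---can change the committee without encoding a cover; and (c) any partial cover (fewer than $n$ disjoint sets, or leaving some element uncovered) provably leaves $d$ in the committee. These three conditions are where the real work is: conditions (b) and (c) interact and force a careful choice of the ratio between element-voter multiplicity and the $p$/$d$ filler counts. The proof is in the same spirit as the hardness construction later generalized by \citet{jan-fal:c:ties-approval-voting} to all greedy unit-decreasing Thiele rules for the problem of detecting ties.
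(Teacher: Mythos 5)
Your high-level skeleton matches the paper's: reduce from \textsc{RX3C}, use one candidate per set plus two flag candidates $p$ and $d$, let the budget-$n$ perturbation steer which sets greedy picks first, and make the $p$-versus-$d$ decision hinge on whether those picks form an exact cover. But there is a concrete flaw in your setup that would break the backward direction. You set the committee size to ``a small additive constant above $n$.'' In the paper the committee size is $k=3n+1$, chosen precisely so that \emph{all} $3n$ set candidates are always elected no matter what happens, and the only contested slot is the one that goes to $p$ or $d$. With $k\approx n+1$, the committee consists of whichever $n$ set candidates greedy happens to pick first, so a \emph{single} added approval that promotes one set candidate past another in the tie-broken order already changes the winning committee --- with no cover needed. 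Your instance would then be a trivial yes-instance whenever $B\geq 1$, and the equivalence with \textsc{RX3C} collapses.

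Two further points need repair. First, your forward direction adds approvals ``nudging $p$ inside a particular element voter''; this neither reorders the set picks nor is it the right lever. The paper's mechanism is to add one approval \emph{for each cover set} in $n$ fresh voters who initially approve nobody; those sets then win the first $n$ rounds on score, and because they cover $U$ they absorb all the voters who approve $d$ together with set candidates, so $d$'s score \emph{drops} to (at most) $p$'s and tie-breaking elects $p$. The point is lowering $d$, not raising $p$; indeed the calibration must make $d$'s surplus over $p$ (at least $t$ minus $O(n)$ when some element is uncovered) unreachable by dumping all $n$ approvals on $p$. Second, you cannot arrange by calibration that the unperturbed greedy run ``does not accidentally realize an exact cover'' --- whether it does depends on the \textsc{RX3C} instance, not on your gadget. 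The paper handles this by simulating the rule once and, if $p$ is already elected, outputting a fixed yes-instance (a cover has then been found). Your membership argument and the overall reduction template are fine; these three issues are where your proof would not go through as written.
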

\begin{proof}
  We give a reduction form the \textsc{RX3C} problem. Our input
  consists of the universe set $U = \{u_1, \ldots, u_{3n}\}$ and
  family $\calS = \{S_1, \ldots, S_{3n}\}$ of three-element subsets of
  $U$. We know that each member of $U$ belongs to three sets
  from~$\calS$.
  We introduce two integers, $T = 10n^5$ and $t = 10n^3$, and we
  interpret both as large numbers, with $T$ being significantly larger
  than $t$.  We form an election $E = (C,V)$ with candidate set
  $C = \{S_1, \ldots, S_{3n}\} \cup \{p,d\}$, and with the following
  voters:
  \begin{enumerate}
  \item For each $S_i \in \calS$, there are $T$ voters that approve
    candidate $S_i$.
  \item For each two sets $S_i$ and $S_{j}$, there are $T$ voters
    that approve candidates $S_i$ and $S_{j}$.
  \item There are $2nT+4nt$ voters that approve $p$ and $d$.
  \item For each $u_\ell \in U$, there are $t$ voters that approve $d$
    and those candidates $S_i$ that correspond to the sets containing
    $u_\ell$.
  \item There are $n$ voters who do not approve any candidates.
  \end{enumerate}
  The committee size is $k = 3n+1$ and the budget, i.e., the number of
  approvals we can add, is $B = n$.  We assume that the tie-breaking
  order among the candidates is:
  \[
    S_1 \pref S_2 \pref \cdots \pref S_{3n} \pref p \pref d.
  \]
  Prior to adding approvals, each candidate $S_i$ is approved by
  $3nT+3t$ voters, $p$ is approved by $2nT+4nt$ voters, and $d$ is
  approved by $2nT+7nt$ voters.

  Let us now consider how GreedyCC operates on this election. Prior to
  the first iteration, all the set candidates have the same score,
  much higher than that of $p$ and $d$. Due to the tie-breaking order,
  GreedyCC chooses $S_1$. As a consequence, all the voters that
  approve $S_1$ are removed from consideration and the scores of all
  the other set candidates decrease by $T$ (or by $T+t$ or $T+2t$, for
  the sets that included the same one or two elements of $U$ as
  $S_1$). GreedyCC acts analogously for the first $n$ iterations,
  during which it chooses a family $\calT$ of $n$ set elements (we
  will occasionally refer to $\calT$ as if it really contained the
  sets from~$\calS$, and not the corresponding candidates).
  
  After the first $n$ iterations, each of the remaining $2n$ set
  candidates either has $2nT$, $2nT+t$, $2nT+2t$, or $2nT+3t$
  approvals (depending how many sets in $\calT$ have nonempty
  intersection with them). Let $x$ be the number of elements from $U$
  that do not belong to any set in $\calT$.  Candidate $p$ is still
  approved by $2nT+4nt$ voters, whereas $d$ is approved by
  $2nT+4nt + xt$ voters. Thus at this point there are two
  possibilities. Either $x = 0$ and, due to the priority order,
  GreedyCC selects $p$, or $x > 0$ and GreedyCC selects $d$. In either
  case, in the following $2n$ iterations it chooses the remaining $2n$
  set candidates (because after the $n+1$-st iteration the score of
  that among $p$ and $d$ who remained, drops to zero or nearly
  zero). If candidate $p$ is selected without adding any approvals,
  then it means that we can find a solution for the \textsc{RX3C}
  instance using a simple greedy algorithm. In this case, instead of
  outputting the just-described instance of
  GreedyCC-\textsc{Add-Robustness-Radius}, we output a fixed one, for
  which the answer is \emph{yes}. Otherwise, we know that without
  adding approvals the winning committee is
  $\{S_1, \ldots, S_{3n},d\}$. We focus on this latter case.
  
  We claim that it is possible to ensure that the winning committee
  changes by adding at most $n$ approvals if and only if there is an
  exact cover of $U$ with~$n$ sets from $\calS$.  Indeed, if such a
  cover exists, then it suffices to add a single approval for each of
  the corresponding sets in the last group of voters (those that
  originally do not approve anyone). Then, by the same analysis as in
  the preceding paragraph, we can verify that the sets forming the
  cover are selected in the first $n$ iterations, followed by $p$,
  followed by all the other set candidates.
  
  For the other direction, let us assume that after adding some $t$
  approvals the winning committee has changed. One can verify that
  irrespective of which (up to) $n$ approvals we add, in the first $n$
  iterations GreedyCC still chooses $n$ set candidates. Thus, at this
  point, the score of $p$ is at most $2nT+4nt+n$ and the score of $d$ is
  at least $2nT+4nt+xt-n$ (where $x$ is the number of elements from
  $U$ not covered by the chosen sets; we subtract $n$ to account for
  the fact that $n$ voters that originally approved $d$ got approvals
  for the candidates selected in the first $n$ interations). If at
  this point $d$ is selected, then in the following $2n$ iterations
  the remaining set candidates are chosen and the winning committee
  does not change. This means that $p$ is selected. However, this is
  only possible if $x = 0$, i.e., if the set candidates chosen in the
  first $n$ iterations correspond to an exact cover of $U$.
\end{proof}

A very similar proof also works for the case of GreedyPAV. The main
difference is that now including a candidate in a committee does not
allow us to forget about all the voters that approve him or her. The
proof is in \Cref{app:greedy-pav}.

\begin{theorem}\label{thm:greedy-pav}
  GreedyPAV-\textsc{Add-Robustness-Radius} is $\np$-complete.
\end{theorem}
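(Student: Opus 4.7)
The plan is to adapt the reduction in the proof of \Cref{thm:greedy-cc} from \textsc{RX3C}, modifying the voter groups to account for the harmonic decay of PAV's marginal weights (in contrast to GreedyCC, where a voter's contribution drops to zero once one of her approved candidates enters the committee). Membership in $\np$ is immediate by the same argument as for GreedyCC: a verifier nondeterministically guesses which $B$ approvals to add, runs the polynomial-time GreedyPAV procedure on the original and the modified elections, and compares the resulting unique winning committees.

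For hardness, I would keep the candidate set $C = \{S_1,\ldots,S_{3n}\} \cup \{p,d\}$, committee size $k=3n+1$, budget $B=n$, tie-breaking order $S_1 \pref \cdots \pref S_{3n} \pref p \pref d$, and essentially the same five voter groups as in the GreedyCC reduction (single-set, pair, $\{p,d\}$, element, and empty voters), with parameters $T$ and $t$ to be chosen below. The target behaviour is identical to the GreedyCC case: without additions, GreedyPAV should select $n$ set candidates in the first $n$ rounds and then $d$ at round $n+1$; after $n$ well-chosen additions, it should instead pick an exact cover in the first $n$ rounds and $p$ at round $n+1$.

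The core difficulty, absent in the GreedyCC analysis, is that under PAV voters do not vanish when one of their approved candidates enters the committee; instead each subsequent approved member contributes weight $\tfrac{1}{i+1}$. In particular, at round $n+1$ the element voters still boost $d$'s marginal by $t\sum_\ell \tfrac{1}{m_\ell+1}$, where $m_\ell$ counts the selected sets containing $u_\ell$. Since $\sum_\ell m_\ell = 3n$, a short case analysis (using convexity of $m \mapsto 1/(m+1)$) shows that $\sum_\ell \tfrac{1}{m_\ell+1}$ equals $\tfrac{3n}{2}$ exactly when the selected family is an exact cover, and is at least $\tfrac{3n}{2}+\tfrac{1}{3}$ otherwise. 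To make the $p$-versus-$d$ decision at round $n+1$ track this gap, I would introduce a sixth voter group consisting of $X$ voters approving only $p$, with $X$ chosen in the half-open interval $[\tfrac{3n}{2}t,\ \tfrac{3n}{2}t + \tfrac{t}{3})$ (for instance $X = \tfrac{3n}{2}t$, relying on the tie-breaking rule to favour $p$). With this choice, $p$'s marginal at round $n+1$ strictly beats $d$'s exactly when the first $n$ selections form an exact cover.

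The hardest part will be tuning $T$ and $t$ simultaneously so that set candidates win every one of the first $n$ rounds while $p$ or $d$ wins round $n+1$, and so that a budget of $n$ additions cannot subvert either behaviour. A direct computation shows that the maximum set-candidate marginal at round $i+1$ equals $(3n - \tfrac{i}{2})T + O(t)$, dropping by only $\tfrac{T}{2}+O(t)$ per round, so the transition has to land in an extremely narrow window. The constraint ``some set candidate beats $p$ at round $n$ even after an adversarial boost of $n$'' forces $T$ to exceed $11t$ by only $O(t/n)$, while ``$p$ beats every remaining set candidate at round $n+1$'' forces $T$ to fall short of $11t$ by $\Theta(t/n)$; these pin $T$ just below $11t$ into a window of width $\Theta(t/n)$, which contains integer solutions once $t$ is polynomially larger than $n$ (for example, $t = n^2$ suffices). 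Separately, since each approval addition raises $p$'s marginal by at most $1$, the total boost to $p$ from $n$ additions is at most $n$, which is dominated by the $\tfrac{t}{3}$ gap between exact- and non-exact-cover values of $\sum_\ell \tfrac{1}{m_\ell+1}$ whenever $t > 3(n+1)$; this blocks the adversary from flipping round $n+1$ in the non-cover case, regardless of how the $n$ additions are distributed. Once the parameters are fixed, both implications follow as in \Cref{thm:greedy-cc}: an exact cover yields $n$ additions (one per cover set, placed in distinct empty voters) that steer the first $n$ rounds and flip round $n+1$ to $p$; conversely, any $n$ additions that change the committee must have forced the first $n$ rounds to select exactly an exact cover, which therefore exists.
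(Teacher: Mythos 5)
Your reduction skeleton, the key PAV-specific observation (that $d$'s residual marginal from the element voters is $t\sum_\ell \frac{1}{m_\ell+1}$, which equals $\frac{3n}{2}t$ exactly for an exact cover and is at least $\frac{3n}{2}t+\frac{t}{3}$ otherwise), and the extra group of $\frac{3n}{2}t$ voters approving only $p$ all match the paper's proof. Where you diverge is the parameter tuning, and that is where your argument has a gap. You keep the $p$/$d$ group at its GreedyCC size $2nT+4nt$ and try to squeeze $T$ into a window of width $\Theta(t/n)$ just below $11t$, derived from the two constraints ``a set candidate beats $p$ at round $n$'' and ``$p$ beats every remaining set candidate at round $n+1$.'' But during the first $n$ rounds the binding competitor is $d$, not $p$: $d$'s marginal is $2nT+4nt+t\sum_\ell\frac{1}{m_\ell+1}$, which exceeds $p$'s by at least $1.5t$ even when the selected sets are pairwise disjoint, and by as much as $\Theta(nt)$ when they overlap (after $n-1$ overlapping selections, $\sum_\ell\frac{1}{m_\ell+1}$ can be as large as roughly $2.25n$ rather than $1.5n+1.5$). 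Against $d$ the lower bound on $T$ can rise to roughly $12.5t$, which is incompatible with your upper bound $T<11t$; in that regime $d$ is selected \emph{before} round $n+1$. This turns out to be harmless for soundness (the committee is still all set candidates plus $d$), but your writeup neither derives the window against $d$ nor argues that an early selection of $d$ cannot change the committee, so the constraint analysis as stated does not establish either direction.

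The paper sidesteps this entire difficulty by inflating the $p$/$d$ group to $2nT+0.5nT+4nt$. The extra $0.5nT$ exactly offsets the $2.5nT$ residual marginal that unselected set candidates retain under PAV after $n$ rounds (versus $2nT$ under CC), so that at round $n+1$ the comparison is $2.5nT+O(t)$ against $2.5nT+\Theta(nt)$, while at rounds $1,\ldots,n$ the set candidates lead by at least $0.5T\gg nt$ over both $p$ and $d$. All inequalities then hold with order-$T$ or order-$nt$ slack for any $T\gg t\gg n$ (the paper uses $T=10n^5$, $t=10n^3$), with no narrow window, no integrality concerns, and no case analysis of how much greedy's first $n$ picks overlap. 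I would strongly recommend adopting that normalization rather than trying to repair the $11t$ window; if you do insist on your tuning, you must redo the lower bound on $T$ against $d$ under disjoint selections, and separately prove that whenever the selections are not disjoint, $d$'s possible early entry leaves the committee equal to the default one.
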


The proof for the case of Phragm{\'e}n-\textsc{Add-Robustness-Radius}
is similar in spirit to the preceding two, but requires careful
calculation of the times when particular groups of voters can purchase
respective candidates.

\begin{theorem}\label{thm:phr}
  Phragm{\'e}n-\textsc{Add-Robustness-Radius} is $\np$-complete.
\end{theorem}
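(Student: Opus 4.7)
The plan is to reduce from \textsc{RX3C}, following the blueprint of the proofs for \Cref{thm:greedy-cc} and \Cref{thm:greedy-pav}. Given an instance with universe $U = \{u_1, \ldots, u_{3n}\}$ and family $\mathcal{S} = \{S_1, \ldots, S_{3n}\}$, we construct an election $E = (C, V)$ with candidates $C = \{S_1, \ldots, S_{3n}\} \cup \{p, d\}$, committee size $k = 3n+1$, budget $B = n$, and tie-breaking order $S_1 \pref \cdots \pref S_{3n} \pref p \pref d$. The voter groups are: (i) for each $S_i$, a very large group of $T$ voters approving only $S_i$, so every set candidate is cheap to buy; (ii) for each $u_\ell \in U$, a group of $t$ voters approving $d$ together with the three set candidates corresponding to the sets containing $u_\ell$; (iii) a carefully calibrated group of $A$ voters approving only $p$; (iv) $n$ ``blank'' voters approving no candidate, who serve as the targets for added approvals. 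The integers $T$, $t$, $A$ are chosen large and in a precise ratio, with $T \gg t \gg n$.

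By a direct computation of purchase times we verify that, regardless of which up to $n$ approvals are added, the first $n$ candidates purchased by Phragm\'en are set candidates; moreover, adding a single approval for $S_i$ in a fresh blank voter strictly decreases $S_i$'s purchase time compared to the other sets, so by adding one approval in each of $n$ distinct blank voters, exactly those $n$ sets (and no others) are purchased first, with tie-breaking fixing the order. The critical observation is that after these first $n$ set purchases, the purchase time of $p$ is strictly smaller than that of $d$ precisely when the chosen $n$ sets form an exact cover of $U$: each group-(ii) voter for a universe element $u_\ell$ has its money reset to $0$ if and only if some purchased set contains $u_\ell$, so if $U$ is fully covered then all group-(ii) voters are drained and $d$ must finance itself essentially from scratch (giving $p$ the win), whereas if some element $u_\ell$ is uncovered, the $t$ undrained group-(ii) voters for $u_\ell$ carry enough residual money to let $d$ be purchased strictly before $p$. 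We calibrate $A$ to lie in the tiny numerical window where $p$ beats $d$ in the former case but loses in the latter.

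With this dichotomy, correctness is straightforward. If an exact cover $\{S_{i_1}, \ldots, S_{i_n}\}$ exists, we add one approval for $S_{i_j}$ in the $j$-th blank voter, making these the first $n$ sets purchased and thereby replacing $d$ by $p$ in the committee. Conversely, if adding at most $n$ approvals changes the committee, the timing argument implies that the first $n$ purchased set candidates exact-cover $U$; here we also rule out, by the choice of parameters, the possibility that added approvals for $p$ or $d$ themselves (rather than for set candidates) shift the winner, since the gap between the $p$- and $d$-purchase times in the relevant cases exceeds the effect of at most $n$ such approvals. As in the proof of \Cref{thm:greedy-cc}, we handle the degenerate case where the unmodified run already selects a cover in its first $n$ iterations by outputting a trivial \emph{yes}-instance.

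The main obstacle is the timing analysis, which in Phragm\'en involves tracking continuous money balances rather than the coarse approval counts used for the greedy Thiele rules. The delicate parts are (a) ensuring the first $n$ rounds always purchase set candidates even under arbitrary $B$-approval modifications, (b) showing that the purchase time of $p$ versus $d$ strictly changes sign at the threshold $x = 0$ (where $x$ counts the uncovered elements), with a gap that no added approval can bridge in the wrong direction, and (c) choosing $T$, $t$, $A$ to be polynomially bounded positive integers satisfying all the required inequalities simultaneously.
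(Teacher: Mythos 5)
Your high-level blueprint (reduce from \textsc{RX3C}, set candidates plus $p$ and $d$, blank voters to receive the $n$ added approvals, exact cover decides $p$ versus $d$) matches the paper's, but your concrete construction has a gap that the paper's construction is specifically engineered to avoid. In your election, the \emph{only} voters who approve $d$ are the group-(ii) universe voters, who also approve the three set candidates containing their element, and $p$ is approved only by its dedicated group of $A \approx 3nt$ voters. Consequently $p$ and $d$ can only be purchased around time $\frac{1}{3nt}$, which is \emph{much later} than time $\frac{1}{T}$, by which point \emph{all} $3n$ set candidates have already been purchased (each has $T$ dedicated supporters who are never drained beforehand). Every universe voter for $u_\ell$ is therefore drained not once but up to three times---once for each set containing $u_\ell$---and the last drain happens when the \emph{last} of those three sets is bought, regardless of whether $u_\ell$ was covered in the first batch. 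The purchase time of $d$ is $\frac{1}{3nt} + \frac{1}{3n}\sum_\ell \lambda_\ell$, where $\lambda_\ell$ is the time the last set containing $u_\ell$ is bought; these $\lambda_\ell$ depend on the entire cascade of later purchases and do not cleanly separate the exact-cover case from the others. So your key claim---that the sign of $\tau_p - \tau_d$ flips exactly at $x=0$ uncovered elements---fails: the cover signal is destroyed before $d$'s purchase time is determined, and no choice of $A$ yields the required threshold.

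The paper fixes exactly this by adding a large group of $T+3t^2-2t$ voters who approve \emph{both} $p$ and $d$, plus tiny asymmetric groups (the $\frac{t}{6n}$ $p$-voters and the $\frac{t}{3n}$-per-element $d$-universe voters). This forces one of $p,d$ to be purchased at time $B_{pd} \approx \frac{1}{T+3t^2-2t}$, which the proof shows lies strictly \emph{between} the time $A$ of the first batch of set purchases and the time $C$ at which any remaining set candidate could be afforded (the chain $A < B_{pd} < C < D$ is verified by explicit inequalities). Hence the $p$-versus-$d$ decision is made while the state still faithfully records whether the first $n$ purchased sets cover $U$, and the asymmetry between the small $p$- and $d$-groups is calibrated so that $n$ added approvals cannot bridge it. If you want to repair your proof, you need this (or an equivalent) mechanism; the rest of your outline---controlling the first batch via blank voters, outputting a trivial yes-instance in the degenerate case, and bounding the effect of $n$ approvals on all time points---is sound and mirrors the paper.
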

\begin{proof}
  We give a reduction from \textsc{RX3C}. As input, we get a universe
  set $U = \{u_1, \ldots, u_{3n}\}$ and a family
  $\calS = \{S_1, \ldots, S_{3n}\}$ of size-$3$ subsets of $U$. Each
  element of $U$ appears in exactly three sets from $\calS$.
  We ask if
  there is a collection of $n$ sets that form an exact cover of $U$.

  Our reduction proceeds as follows. First, we define two numbers,
  $T = 900n^{12}$ and $t = 30n^5$.  The intuition is that both numbers
  are very large, $T$ is significantly larger than $t^2$, and $t$ is
  divisible by $6n$ (the exact values of $T$ and $t$ are not crucial;
  we did not minimize them but, rather, used values that clearly work
  and simplify the reduction). We form an election $E = (C,V)$ with
  candidate set $C = \{S_1, \ldots, S_{3n}\} \cup \{p,d\}$ and the
  following voters:
  \begin{enumerate}
  \item For each $S_i \in \calS$, there are $T$ voters that approve
    candidate $S_i$. We refer to them as the $\calS$-voters.
  \item For each $u_\ell \in U$, there are $t^2$ voters that approve
    those candidates $S_i$ that correspond to the sets containing
    $u_\ell$. We refer to them as the universe voters.  For each
    $u_\ell \in U$, $\frac{t}{3n}$ of $u_\ell$'s universe voters
    additionally approve candiate $d$.  We refer to them as the
    $d$-universe voters.
    
  \item There are $T+3t^2-2t$ voters that approve both $p$ and $d$. We refer to
    them as the $p$/$d$-voters.
  \item There are $\frac{t}{6n}$ voters that approve
    $p$. We refer to them as the $p$-voters.
  \item There are $n$ voters who do not approve any candidate, and to
    whom we refer as the empty voters.
  \end{enumerate}
  The committee size is $k = 3n+1$ and we can add up to $B = n$
  approvals. The tie-breaking order is:
  \[
    S_1 \pref S_2 \pref \cdots \pref S_{3n} \pref d \pref p.
  \]
  In this election, each candidate $S_i$ is approved by exactly
  $T+3t^2$ voters, $d$ is approved by $(T+3t^2-2t) + t$ voters, and
  $p$ is approved by $(T+3t^2-2t) +\frac{t}{6n}$
  voters.

  \begin{figure}
    \centering
    \begin{tikzpicture}[scale=1]
      \draw[->] (0,0) -- (10,0);
      \foreach \x\l in {1/$0$,3/$A$,5/$B_{pd}$,7/$C$,9/$D$}{
        \draw (\x,0.1) -- (\x,-0.1) node[anchor=north] {\l};
        
      }
      \draw[decorate, decoration = {brace, amplitude=5pt}] (7,0.5) -- (9,0.5) node[pos=0.5,above=16pt,black]{{\scriptsize remaining set candidates}} node[pos=0.5,above=7pt,black]{{\scriptsize are selected\phantom{y}}};      

      \draw[decorate, decoration = {brace, amplitude=5pt}] (1,0.5) -- (3,0.5) node[pos=0.5,above=16pt,black]{{\scriptsize no candidates}} node[pos=0.5,above=7pt,black]{{\scriptsize selected yet}};      

      \draw[<-] (3,-0.5) -- (3,-0.88) node[anchor=north] {{\scriptsize first up to $n$ set}} node[below=7pt]{{\scriptsize candidates are selected}};
      \draw[<-] (5,0.2) -- (5,0.7)  node[anchor=south,above=7pt] {{\scriptsize $p$ or $d$ is selected}} node[anchor=south,above=0pt] {{\scriptsize by this time}};

      \draw[decorate, decoration = {brace, amplitude=5pt,mirror,aspect=0.85}] (3.2,-0.55) -- (6.8,-0.55) node[pos=0.95,below=7pt,black]{{\scriptsize no set candidates are}} node[pos=0.95,below=16pt,black]{{\scriptsize selected during this time}};     
      
    \end{tikzpicture}
    
    \caption{Timeline for the Phragm{\'e}n rule acting    
      on the election from Theorem~\ref{thm:phr}.}
    \label{fig:phr}
  \end{figure}
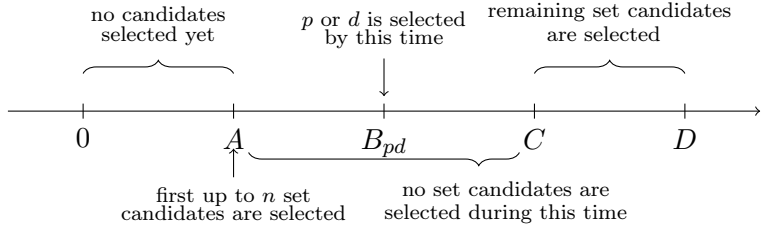

  Let us consider how Phragm{\'e}n operates on this election (we
  encourage the reader to consult Figure~\ref{fig:phr} while reading
  the following text). First, we observe that when we reach time point
  $D = \frac{1}{T}$ then all the not-yet-selected set candidates (for
  whom there still is room in the committee) are selected. Indeed, at
  time $D$ the $\calS$-voters collect enough funds to buy them. On the
  other hand, the earliest point of time when some voters can afford
  to buy a candidate is $A = \frac{1}{T+3t^2}$. Specifically, at time
  $A$ set voters and universe voters jointly purchase up to $n$ set
  candidates (selected sequentially, using the tie-breaking order and
  taking into account that when some candidate is purchased then all
  his or her voters spend all their so-far collected money). Let us
  consider some candidate $S_i$ that was not selected at time point
  $A$. Since $S_i$ was not chosen at $A$, at least $t^2$ of the $3t^2$
  universe voters that approve $S_i$ paid for another candidate at
  time $A$. Thus the earliest time when voters approving $S_i$ might
  have enough money to purchase him or her is $C$, such that:
  \[
    \underbrace{C(T+2t^2)}_{\substack{\text{money earned by those voters} \\ \text{who did not spend it at time $A$}}} +
    \underbrace{(C-A)t^2}_{\substack{\text{money earned between times $C$ and $A$ by uni-} \\ \text{verse voters who paid for candidates at time $A$}}}
    = 1.
  \]
  Simple calculations show that $C = \frac{1+At^2}{T+3t^2}$. Noting
  that $A = \frac{1}{T+3t^2}$, we have that $C = A + A^2t^2$.
  However, prior to reaching time point $C$, either candidate $p$ or
  candidate $d$ is selected. Indeed, at time point
  $B_{pd} = \frac{1}{T+3t^2-2t}$ the $p$/$d$-voters alone would have
  enough money to buy one of their candidates:  We show that $B_{pd} <
  C$, or, equivalently,  that $\frac{1}{B_{pd}} > \frac{1}{C}$.  It
  holds that $\frac{1}{B_{pd}} = T+3t^2 - 2t$ and:
  \[
    \frac{1}{C} = \frac{1}{A+A^2t^2} = \frac{1}{A} \cdot \frac{1}{1+At^2}
    = \frac{T+3t^2}{1+\frac{t^2}{T+3t^2}}
    = \frac{(T+3t^2)^2}{T+4t^2}.
  \]
  By simple transformations, $\frac{1}{B_{pd}} > \frac{1}{C}$ is equivalent to:
  \[
    (T+3t^2 - 2t)(T+4t^2) > (T+3t^2)^2.
  \]
  The left-hand side of this inequality can be expressed as:   
  \[
    ((T+3t^2) - 2t)((T+3t^2) + t^2) = (T+3t^2)^2 +
    \underbrace{(t^2-2t)(T+3t^2) - 2t^3}_{\substack{\text{positive because $t^2-2t > 2t$} \\ \text{due to our assumptions} }},
  \]
  and, hence, our inequality holds. All in all, we have
  $A < B_{pd} < C < D$.

  It remains to consider which among $p$ and $d$ is selected.  If $p$
  were to be selected, then it would happen at time point
  $B_p = \frac{1}{T^2+3t^2-2t + \frac{t}{6n}}$. This is when the
  $p$/$d$-voters and $p$-voters would collect enough money to purchase $p$
  (assuming the former would not spend it on $d$ earlier).  Now, if at
  time $A$ fewer than $n$ set candidates were selected, then at least
  $\frac{t}{3n}$ of the $d$-universe voters would retain their money
  and, hence, $d$ would be selected no later than at time point
  $B_d = \frac{1}{T^2+3t^2-2t+\frac{t}{3n}} < B_p$.  On the other
  hand, if at time point $A$ exactly $n$ set candidates were selected
  (who, thus, would have to correspond to an exact cover of $U$) then all
  the $d$-universe voters would lose their money and voters who
  approve~$d$ would not have enough money to buy him or her before
  time $B_p$.  Indeed, in this case the money accumulated by voters
  approving $d$ would at time $B_p$ be:
  \[
    X = \underbrace{\frac{T+3t^2-2t}{T+3t^2-2t +
        \frac{t}{6n}}}_{\text{money of the $p$/$d$ voters}} +
    \underbrace{t \left( \frac{1}{T+3t^2-2t+\frac{t}{6n}} -
        \frac{1}{T+3t^2}\right)}_{\substack{\text{money collected by
          the $d$-universe} \\ \text{voters between time points $A$ and
          $B_p$}}}
  \]
  We claim that $X < 1$, which is equivalent to the following
  inequality (where we replace $T+3t^2$ with $M$; note that
  $M = \frac{1}{A}$):
  \begin{align*}
    \frac{M-t}{M-2t+\frac{t}{6n}} < 1 + \frac{t}{M} = \frac{M+t}{M} 
  \end{align*}
  By simple transformations, this inequality is equivalent to:
  \begin{align*}
  0 < \frac{Mt+t^2}{6n} - 2t^2, 
  \end{align*}
  which holds as $t > 6n$ and $M > 2t^2$.  To conclude, if at time
  point $A$ there are $n$ set candidates selected for the committee,
  then $p$ is selected for the committee as well.

  Finally, we observe that irrespective of which among $p$ and $d$ is
  selected for the committee, the voters that approve the other one do
  not collect enough money to buy him or her until time $D$.  Thus the
  winning committee either consists of all the set candidates and $d$,
  or of all the set candidates and $p$, where the latter happens
  exactly if at time $A$ candidates corresponding to an exact cover of
  $U$ are selected.\medskip

  If Phragm{\'e}n would choose candidates corresponding to an exact
  cover of $U$ at point $A$, then our reduction outputs a fixed
  yes-instance (as we have just found that an exact cover
  exists). Otherwise, we output the just-described election with
  committee size $k = 3n+1$ and ability to add $B = n$ approvals.  To
  see why this reduction is correct, we make the following three
  observations:
  \begin{enumerate}
  \item By adding $n$ approvals, we cannot significantly modify any of
    the time points $A$, $B_d$, $B_p$, $B_{pd}$, $C$, and $D$ from the preceding
    analysis, except that we can ensure which (up to) $n$ sets are
    first considered for inclusion in the committee just before time
    point $A$.
  \item If there is a collection of $n$ sets in $\calS$ that form an exact
    cover of $U$, then---by the above observation---we can ensure that these
    sets are selected just before time point $A$ (by adding one approval for each
    of them among $n$ distinct empty voters). Hence, if there is an exact cover
    then---by the preceding discussions---we can ensure that the winning committee changes
    (to consist of all the set candidates and $p$).
  \item If there is no exact cover of $U$, then no matter where we add
    (up to) $n$ approvals, candidate $d$ gets selected and, so, the
    winning committee does not change (in particular, even if we add
    $n$ approvals for $p$).
  \end{enumerate}
  Since the reduction clearly runs in polynomial time, the proof is complete.
\end{proof}

It remains to argue that \textsc{Remove-Robustness-Radius} and
\textsc{Swap-Robustness-Radius} also are $\np$-complete for each of
our rules. This, however, is easy to see. In each of the three proofs
above, we can add up to $B = n$ approvals and there are $n$ voters
with empty approval sets.  We were using these $n$ voters to add a
single approval for each of the $n$ sets forming an exact cover,
leading to the selection of $p$ instead of $d$. For the case of
removing approvals, it suffices to replace the $n$ empty voters with
$3n$ ones, such that each set candidate is approved by exactly one of
them, and to allow removeing up to $B = 2n$ approvals. Now we can
achieve the same result as before by deleting approvals for those set
candidates that do not form an exact cover. On the other hand, for
\textsc{Swap-Robustness-Radius}, we add $n$ dummy candidates and we
let each of the empty voters approve a unique dummy candidate. We
allow $n$ approvals to be moved. Then, we can achieve the same effect
as with adding approvals (to the empty voters), by moving approvals
from the dummy candidates to the set ones. Hence, the following holds.
Importantly, the operations that we can perform suffices to change the
order in which the set candidates are considered, but are too small to
affect the general logic of how our rules operate on the constructed
elections.

\begin{corollary}
  Let $\calR$ be one of GreedyCC, GreedyPAV, and
  Phragm{\'e}n. $\calR$-\textsc{Remove-Robustness-Radius} and
  $\calR$-\textsc{Swap-Robustness-Radius} are $\np$-complete.
\end{corollary}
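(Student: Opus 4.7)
The plan is to adapt the Add-reductions of Theorems~\ref{thm:greedy-cc}, \ref{thm:greedy-pav}, and~\ref{thm:phr} by modifying only the small ``buffer'' voter group at the bottom of each construction, so that performing $O(n)$ Remove or Swap operations reproduces the combinatorial effect of adding $n$ approvals. Membership in $\np$ is immediate in every case: since GreedyCC, GreedyPAV, and Phragm\'en are resolute and polynomial-time computable, a nondeterministic machine can guess which approvals to remove or swap, deterministically recompute the winning committee, and accept iff it differs from the original.

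For the Remove variant, I would keep all the voter groups of the original reduction except the last one, replacing the $n$ empty voters by $3n$ voters such that every set candidate $S_i$ is approved by exactly one of them. With budget $B=2n$, I would argue that changing the winner is possible iff we can remove the $2n$ approvals corresponding to the $2n$ sets lying outside some exact cover of $U$; the $n$ remaining voters then each approve a distinct cover-set, which is indistinguishable from the $n$ additions performed in the original proof. The side-effect of the modification is that every $S_i$ starts with one extra approval, which shifts every set candidate's initial score by the same small constant and therefore preserves the order in which GreedyCC and GreedyPAV select sets in the first $n$ iterations (resp.\ preserves the ordering of Phragm\'en's time points $A < B_{pd} < C < D$).

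For the Swap variant, I would introduce $n$ fresh dummy candidates $e_1, \ldots, e_n$, place them at the bottom of the tie-breaking order, and let each of the original $n$ empty voters approve a unique $e_i$ and nothing else. With budget $B=n$, swapping the approval of the $i$-th such voter from $e_i$ to some set candidate $S_j$ exactly mimics an Add operation in the original reduction, because no other voter approves any $e_i$ and the dummies are approved by only one voter each. In particular, the dummies can never enter the committee nor influence GreedyPAV's reweighting nor Phragm\'en's purchasing events, so both directions of the original argument go through verbatim.

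The main obstacle is verifying, especially for Phragm\'en, that the perturbations introduced by the modified voter group do not shift the delicate time points of Theorem~\ref{thm:phr} enough to upset the decisive inequalities $B_{pd} < C$ and $B_d < B_p$. Since the modification changes approval counts of set candidates by at most $O(1)$ while those inequalities rely on gaps controlled by the large parameters $T$ and $t$ (and were established with slack of much higher order than the perturbation), the case analysis survives without alteration. Once this is checked, the three Add-hardness proofs immediately yield the claimed $\np$-completeness of both \textsc{Remove-Robustness-Radius} and \textsc{Swap-Robustness-Radius} for each of GreedyCC, GreedyPAV, and Phragm\'en.
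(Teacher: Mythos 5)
Your proposal is correct and follows essentially the same route as the paper: replacing the $n$ empty voters by $3n$ single-approval voters with budget $B=2n$ for \textsc{Remove}, and introducing $n$ dummy candidates with budget $B=n$ for \textsc{Swap}, together with the observation that these $O(n)$-sized perturbations only reorder the set candidates without disturbing the large-parameter inequalities. Your extra care about the Phragm\'en time points and the tie-breaking position of the dummies is a welcome elaboration of what the paper asserts more tersely.
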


\section{Conclusion and Further Work}

We have adapted the robustness framework of Bredereck et
al.~\cite{bre-fal-kac-nie-sko-tal:j:robustness} to the approval
setting. We have shown that the robustness levels of our rules are
either $1$ or~$k$ (so small perturbations of the votes are either
guaranteed to have minor effect only, or may completely change the
results), and we have studied the complexity of deciding if a given
number of perturbations can change election results (as well as the
complexity of counting the number of ways to achieve this effect).

\paragraph{To Arkadii!}\quad Piotr Faliszewski is very grateful
to Arkadii Slinko for all the work they have done together (and all
that is still coming up). Arkadii opened the door to the studies of
multiwinner voting for Piotr (and many others!), one of their first
joint papers was on the complexity of swap
bribery~\cite{elk-fal-sli:c:swap-bribery}, a problem very similar to
\textsc{Robustness Radius}, so connecting both multiwinner voting and
bribery-style problems in honor of Arkadii seems quite fitting.

\paragraph{Acknowledgements.}\quad
The research presented in this paper has been partially supported by
AGH University of Science and Technology and the ``Doktorat
Wdrożeniowy'' program of the Polish Ministry of Science and Higher
Education, as well as the funds assigned by Polish Ministry of Science
and Technology to AGH University. This project has received funding
from the European Research Council (ERC) under the European Union’s
Horizon 2020 research and innovation programme (grant agreement No
101002854).
\begin{center}
\noindent \includegraphics[width=2.5cm]{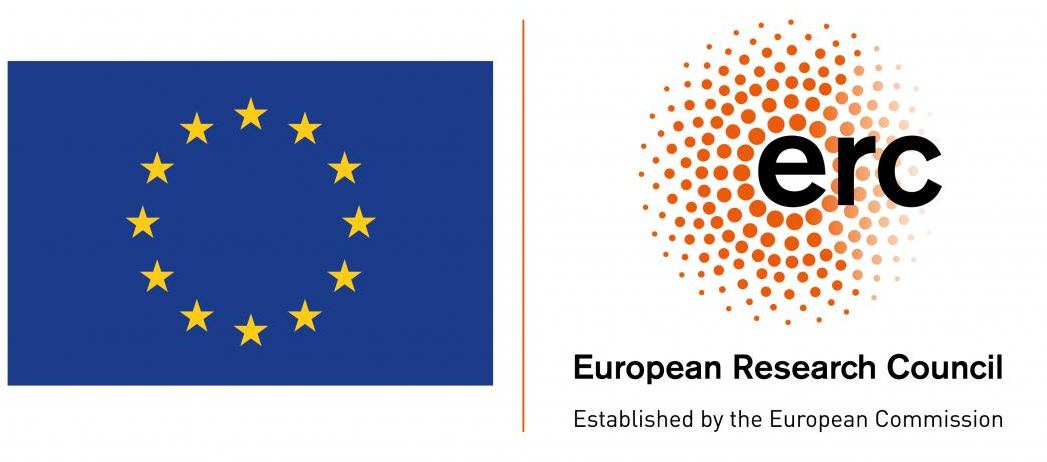}
\end{center}

\bibliography{bib-robust}

\appendix

\section{Proof of  \Cref{thm:sav-remove-counting}}\label{app:sav-remove-counting}

$\sharpp$-completeness follows in the same way as in the proof of
\Cref{thm:sav-add-counting}. For $\sharpp$-hardness, we give a
reduction from the \textsc{\#Perfect-Matchings} problem. Our
construction is the same as in the proof of
Theorem~\ref{thm:sav-add-counting} except that we ensure that every
voter approves of $n^2$ candidates (so each filler candidate approves
one vertex candidate and $n^2-1$ dummy candidates, whereas each edge
voter approves $2$ vertex candidates and $n^2-2$ dummy ones). We
maintain that initially each vertex candidate has score $2$ (so we add
many more filler voters) and each filler candidate has score
$\frac{1}{n^2}$ (the latter score is a consequence of the fact that
each dummy candidate is approved by exactly one voter).  We use the
same notation as in the proof of Theorem~\ref{thm:sav-add-counting}.

Let us now consider how the scores of the vertex candidate change as we delete $n$
approvals and, in consequence, whether the set of winning committees changes or not.
We consider several cases:
\begin{description}
\item[Some vertex candidate loses an approval.] If we delete an
  approval for some vertex candidate~$x$, then his or her score
  decreases by $\frac{1}{n^2}$ due to this. The only way to increase
  $x$'s score is to remove some other approvals in the votes where $x$
  is approved. One can verify that the largest increase possible with
  deleting $n-1$ approvals is to remove them in the same vote; this
  increases $x$'s score by
  $\frac{1}{n^2-n-1}-\frac{1}{n^2} =
  \frac{n-1}{n^2(n^2-n+1}$. However,
  $\frac{-1}{n^2} + \frac{n-1}{n^2(n^2-n+1)} < 0$ so if we delete some
  approval for $x$, his or her score certainly becomes lower than
  $2$. On the other hand, by deleting $n$ approvals it is possible to
  decrease scores of at most $n$ out of $2n$ vertex candidate.  Thus,
  if we delete an approval for some vertex candidate, then not all
  vertex candidates will have equal score and the set of winning
  committees will change.

\item[Some dummy candidate loses an approval in a filler vote.] If we delete an approval for
some dummy candidate in a filler vote, then the score of the vertex candidate $x$ approved by this
voter increases (and from the previous case we know that it is impossible to bring $x$'s score 
down to exactly $2$ by deleting some of his approvals). By deleting further $n-1$ approvals, it
is possible, at best, to increase the scores of $2(n-1)$ further vertex candidates. As a consequence,
there certainly is some vertex candidate whose score remains $2$ and, thus, the set of winning
committees certainly changes.

\item[Approvals are deleted only from dummy candidates in the edge
  voters.]  Let us consider the case where we delete approvals only
  from the dummy candidates in edge voters.  Using a simple counting
  argument (analogous to the ones used previously), one can verify
  that if we delete more than one approval in a single edge vote or if
  we delete approvals from edge votes that do not form a perfect
  matching then some vertex candidates end up with higher score than
  others. In consequence, the set of winning committees changes. On
  the other hand, if we delete a single dummy candidate from each of
  the $n$ edge votes that correspond to a perfect matching then the
  scores of all the vertex candidates increase by the same ammount and
  the set of winning committees does not change
\end{description}

As in the proof of Theorem~\ref{thm:sav-add-counting} it remains to note that if $M$ is the
number of perfect matchings in our input graph then there are exactly $M \cdot (n^2-2)^n$ ways
of deleting $n$ approvals without changing the set of winning committees (for each of the $n$
edges in each of the perfect matchings there are $n^2-2$ independent choices for removing
a single dummy candidate).

\section{Proof of Theorem~\ref{thm:greedy-pav}}\label{app:greedy-pav}

  The proof is analogous to that of Theorem~\ref{thm:greedy-cc} and we
  only modify some details of the argument. We reduce from
  \textsc{RX3C} and the input instance consists of the universe set
  $U = \{u_1, \ldots, u_{3n}\}$ and family
  $\calS = \{S_1, \ldots, S_{3n}\}$ of three-element subsets of
  $U$. Each member of $U$ belongs to exactly three sets from~$\calS$.

  We have two integers, $T = 10n^5$ and $t = 10n^3$, interpreted as
  two large numbers, with $T$ significantly larger than $t$.  We form
  an election $E = (C,V)$ with candidate set
  $C = \{S_1, \ldots, S_{3n}\} \cup \{p,d\}$, and with the following
  voters:
  \begin{enumerate}
  \item For each $S_i \in \calS$, there are $T$ voters that approve
    candidate $S_i$.
  \item For each two sets $S_i$ and $S_{j}$, there are $T$ voters
    that approve candidates $S_i$ and $S_{j}$.
  \item There are $2nT+ 0.5nT+4nt$ voters that approve $p$ and $d$.
  \item For each $u_\ell \in U$, there are $t$ voters that approve $d$
    and those candidates $S_i$ that correspond to the sets containing
    $u_\ell$.
  \item There are $1.5nt$ voters who approve $p$.
  \item There are $n$ voters who do not approve any candidates.
  \end{enumerate}
  The committee size is $k = 3n+1$ and we can add up to $B = n$
  approvals.  We assume that the tie-breaking order among the
  candidates is:
  \[
    S_1 \pref S_2 \pref \cdots \pref S_{3n} \pref p \pref d.
  \]
  Prior to adding any approvals and running the rule, each candidate $S_i$ has score
  $3nT+3t$, $p$ has score $2nT + 0.5nT +4nt + 1.5nt$, and $d$ has score $2nT+0.5nT+4nt + 3nt$.

  Let us now consider how GreedyPAV operates on this election. Prior
  to the first iteration, all the set candidates have the same score,
  much higher than that of $p$ and $d$. Due to the tie-breaking order,
  GreedyPAV chooses $S_1$. As a consequence, the scores of all other
  set candidates decrease to $(3n-1)T + 0.5T$ plus some number of
  points from the fourth group of voters (but this part of the score
  is much smaller than that from the first two groups of
  voters). GreedyPAV acts analogously during the first $n$ iterations
  and it chooses a family $\calT$ of $n$ set elements.
  
  After the first $n$ iterations, each of the remaining $2n$ set
  candidates has $2nT + 0.5nT$ points from the first two groups of
  voters and at most $3nt$ points from the fourth group (in fact less,
  but this bound suffices). On the other hand, both $p$ and $d$ have
  at least $2nT + 0.5nT + 4nt$ points and, so, in the next
  iteration the algorithm chooses one of them.
  Specifically, $p$ has score $2nT+0.5nT+4nt+1.5nt$ and $d$ has score
  $2nT+0.5nT+4nt + 1.5nt + x$, where the value of $x$ is as
  follows. If~$\calT$ corresponds to an exact cover of $U$, then $x$
  is $0$ because for each set candidate that GreedyPAV adds during the
  first $n$ iterations, $d$ loses $1.5t$ points from the fourth group
  of voters.  However, if $\calT$ does not correspond to an exact
  cover of $U$, then for at least one added set candidate the score of
  $d$ does not decrease by $1.5t$ but by at most
  $t+(\nicefrac{1}{2}-\nicefrac{1}{3})t$. Thus $x$ is at least
  $\frac{t}{3}$. So, if $\calT$ corresponds to an exact cover of $U$
  then $p$ is selected and otherwise $d$ is.  In either case, the
  score of the unselected one drops by at least $1.25nT$, which means
  that in the following $2n$ iterations the remaining set candidates
  are selected (because each of them has score at least $1.5nT$).
  
  If candidate $p$ is selected without adding any approvals, then it
  means that we can find a solution for the \textsc{RX3C} instnace
  using a simple greedy algorithm. In this case, instead of outputting
  the just-described instance of
  GreedyPAV-\textsc{Add-Robustness-Radius}, we output a fixed one, for
  which the answer is \emph{yes}. Otherwise, we know that without
  adding any approvals the winning committee is
  $\{S_1, \ldots, S_{3n},d\}$. We focus on this latter case.

  If there is an exact cover of $U$ by sets from $\calS$ then it
  suffices to add a single approval for each of the corresponding sets
  in the last group of voters (those that originally do not approve
  anyone). Then, by the same analysis as above, we can verify that the
  sets forming the cover are selected in the first $n$ iterations,
  followed by $p$, followed by all the other set candidates.
  
  For the other direction, let us note that no matter which $n$
  approvals we add, it is impossible to modify the general scenario
  that GreedyPAV follows: It first chooses $n$ set candidates, then
  either $p$ or $d$ is selected (where the former can happen only if
  the first $n$ set canidates correspond to an exact cover of $U$),
  and finally the remaining $2n$ set candidates are chosen.  This is
  so, because adding $n$ voters results in modifying each of the
  scores by a value between $-n$ and $n$ and such changes do not
  affect our analysis from the preceding paragraphs.  This completes
  the proof.

\end{document}